\let\originalNl\nl                                                             
\newcommand\linesOn{\let\nl\originalNl}
\newcommand\linesOff{\let\nl\relax}
\def\whpFootnoteText/{
    We say a property of a protocol holds \emph{\whplongText/} if for each constant $a$, the constant parameters of the protocol can be set such that the property holds for each sufficiently large population size $n$ with probability at least $1 - n^{-a}$.
}
\crefname{prop}{Property}{Properties}
\Crefname{prop}{Property}{Properties}
\crefname{case}{Case}{Cases}
\crefname{guarantee}{Guarantee}{Guarantees}
\crefname{obs}{Observation}{Observations}
\crefname{reas}{Reason}{Reasons}
\crefname{type}{Type}{Types}
\Crefname{subsubsection}{Section}{Sections}
\newcommand{\undervert}[2]{%
    \begin{array}[t]{@{}c@{}}
    #1 \\
    \stackrel{|}{\scriptstyle\text{#2}}
    \end{array}
}
\def\ProtocolBackupMaj/{\textsc{BackupMajority}}
\def\ProtocolFormJunta/{\textsc{FormJuntaUniform}}
\def\ProtocolFormJuntaExt/{\textsc{FormJunta}}
\newcommand*{\ProtocolSimpleMaj}[1]{\ensuremath{\textsc{SimpleMajority}_{#1}}}
\newcommand*{\ProtocolStableMaj}[1]{\ensuremath{\textsc{StableMajority}_{#1}}}
\newcommand*{\ProtocolConvergentMaj}[1]{\ensuremath{\textsc{ConvergentMajority}_{#1}}}
\newcommand*{\ProtocolUniformMaj}[1]{\ensuremath{\textsc{UniformMajority}_{#1}}}
\newcommand*{\ProtocolPhaseClock}[1]{\ensuremath{\textsc{PhaseClock}_{#1}}}
\newcommand*{\lmax}{l_{\max}}
\def\TimeC/{\ensuremath{T_{\operatorname{C}}}}
\def\TimeINF/{\ensuremath{T_{\operatorname{INF}}}}
\def\TimeLB/{\ensuremath{T_{\operatorname{LB}}}}
\def\TimeST/{\ensuremath{T_{\operatorname{ST}}}}
\def\bTrue/{\ensuremath{\textsc{true}}}
\def\bFalse/{\ensuremath{\textsc{false}}}
\newcommand*{\Round}{\mathrm{H}}
\newcommand*{\RoundStart}{\operatorname{R}_{\mathrm{Start}}}
\newcommand*{\RoundEnd}{\operatorname{R}_{\mathrm{End}}}
\newcommand*{\RoundLength}{\operatorname{R}_{\mathrm{Length}}}
\newcommand*{\RoundStretch}{\operatorname{R}_{\mathrm{Stretch}}}
\newcommand*{\BiasAbsolute}{\alpha}
\newcommand*{\TLoad}[1]{\Phi(#1)}
\newcommand*{\STLoad}[1]{\Psi(#1)}
\newcommand*{\LExpl}[1]{\operatorname{expl}_{#1}}
\def\chosenxi/{0.02}
\def\chosenoneminusxi/{0.98}
\newcommand*{\PCdef}[2]{%
    \expandafter\newcommand\csname PC#1\endcsname[1]{%
        \ensuremath{\text{\normalfont\texttt{PC#2}}(##1)}%
    }%
}
\newcommand*{\PCcall}[2]{\texttt{\ProtocolPhaseClock{#1}(\ensuremath{#2})}}    
\newcommand*{\Vardef}[2]{%
    \expandafter\newcommand\csname #1\endcsname[1]{%
        \ensuremath{\operatorname{\MakeLowercase{#2}}_{##1}}%
    }%
}
\title{%
    Time-space Trade-offs in Population Protocols
    for~the~Majority~Problem
}
\author{Petra Berenbrink}{Universität Hamburg, Germany}{petra.berenbrink@uni-hamburg.de}{}{}
\author{Robert Elsässer}{University of Salzburg, Austria}{elsa@cosy.sbg.ac.at}{}{}
\author{Tom Friedetzky}{Durham University, U.K.}{tom.friedetzky@dur.ac.uk}{}{}
\author{Dominik Kaaser}{Universität Hamburg, Germany}{dominik.kaaser@uni-hamburg.de}{}{}
\author{Peter Kling}{Universität Hamburg, Germany}{peter.kling@uni-hamburg.de}{https://orcid.org/0000-0003-0000-8689}{}
\author{Tomasz Radzik}{King's College London, U.K.}{tomasz.radzik@kcl.ac.uk}{}{Tomasz Radzik's work was supported by EPSRC grant EP/M005038/1, \enquote{Randomized algorithms for computer networks}.}
\authorrunning{P. Berenbrink, R. Elsässer, T. Friedetzky, D. Kaaser, P. Kling, and T. Radzik}
\keywords{
    distributed computing,
    majority,
    population protocols,
    stochastic processes
}
\begin{document}
\maketitle
\begin{abstract}
{Population protocols are a model for distributed computing that is focused on
simplicity and robustness. A system of $n$ identical agents (finite state
machines) performs a global task like electing a unique leader or determining
the majority opinion when each agent has one of two opinions. Agents
communicate in pairwise interactions with randomly assigned communication
partners. Quality is measured in two ways:
\begin{enumerate*}[label=, afterlabel=]
\item the number of interactions to complete the task and
\item the number of states per agent.
\end{enumerate*}
We present protocols for the majority problem that allow for a trade-off
between these two measures. Compared to the only other trade-off
result~\cite{DBLP:conf/podc/AlistarhGV15}, we improve the number of
interactions by almost a linear factor. Furthermore, our protocols can be made
uniform (working correctly without any information on the population size $n$),
yielding the first uniform majority protocols that stabilize in a subquadratic
number of interactions.
}
\end{abstract}

{\section{Introduction}%
\label{sec:introduction}

In this article we consider the \emph{majority} problem in the probabilistic \emph{population model}.
Majority is a fundamental problem in distributed computing.
There are $n$ different agents, each with one of two opinions, say $A$ and $B$ and the goal is to agree on the opinion with the larger support.
This problem occurs when all elements of a distributed system have to reach consensus on the value of some parameter which reflects the prevailing opinion what this value should be.
Because of its importance, the majority problem is frequently used as a case study in analysis and comparison of strengths and limitations of various models of distributed computing.

The population model was introduced by \textcite{DBLP:conf/podc/AngluinADFP04, DBLP:journals/dc/AngluinADFP06} as a model to explore the computational power of resource-limited, mobile agents.
Agents are modeled as finite-state machines.
In every step, a pair of agents is chosen uniformly at random, observe each other's state, and perform a deterministic state transition.
This is called an \emph{interaction}.
States are mapped to outputs by a problem-specific output function.
In the case of the majority problem, one can think of an agent's output as being $A$ or $B$, indicating which opinion the agent believes to be the majority.

The quality of a population protocol is measured in terms of the number of interactions (the runtime) and the number of states per agent required to \enquote{successfully compute} the desired output.
The number of interactions is sometimes expressed in \emph{parallel time}, which divides the number of interactions by $n$ to account for the inherent parallelism of the system.
In order to avoid confusion, we stick to the actual number of interactions throughout the article.

There are several definitions for what is conceived as a \enquote{successful computation}.
A typical requirement is that the system must, eventually, reach a state with correct output and which is \emph{stable} – i.e., no possible future transition can change the agents' output.
However, runtime notions differ in \emph{when} this strict guarantee must be achieved.
A natural definition is to measure the number $t$ of interactions after which the system is in such a stable state with correct output.
This notion is used in most recent publications, especially for lower bounds (cf.~\cref{sec:related_literature}).
Another definition considers the number of interactions $t$ after which the \emph{current execution} always gives the correct output.
The former runtime notion is typically referred to as \emph{stabilization} and the latter as \emph{convergence} (see~\cref{sec:model}).

One may wonder what the advantage in measuring the convergence time instead of the stabilization time may be.
In~\cite{DBLP:journals/dc/AngluinAE08a} the authors introduce a hybrid protocol that combines a \enquote{fast} protocol that might never converge to the correct answer with a \enquote{slow} one that stabilizes at the correct answer.
The hybrid protocol switches its output from the fast protocol, which might be incorrect, to that of the slow but always correct protocol when it is likely that the slow protocol has finished.
And therein lies the crux: without further safeguards, it is possible, although with only negligible probability, that a correct output reached by the fast protocol at time $t$ is later temporarily overwritten by a currently still wrong output of the slow protocol.
Hence, while the system has converged at time $t$, it is not yet stable.
It will stabilize only when the slow protocol does so.
The convergence (to the correct output) always happens by the time when the computation stabilizes (on the correct output).
The stabilization may, however, be reached later, sometimes much later, than convergence.

A desirable feature of population protocols is \emph{uniformity}, in the sense that a single algorithm should be designed to work for populations of any size.
Due to the simplicity of transition-based algorithms and the uniformity, uniform population protocols are well suited to model real-world systems that consist of many but comparatively simple agents, like a flock of birds or large sensor networks aggregating information (count, sum, average, extrema, median, or histogram).
In both scenarios the agents' computational power is bounded and the algorithms should not depend on the number of agents.

The underlying theme of this article is to exhibit trade-offs in population protocols between the running time and the required number of states, highlighting methods which help achieving fast stability (in addition to convergence) and uniformity of protocols.

\subsection{Our Contribution}%
\label{sec:our_contribution}

Our protocols for the majority problem in the population model provide an integer parameter $s \geq 2$ that enables a trade-off between the number of states and the runtime.
Our results also depend on the \emph{absolute bias} $\BiasAbsolute$, which is the initial absolute difference between the number of agents supporting opinion $A$ and $B$, respectively.
In the following we state the results for the tightest case when $\BiasAbsolute = 1$; see the corresponding theorems for the full statements.

Our first result is a comparatively simple protocol that, \whplong/, determines the exact majority in $\LdauOmicron{n \cdot {(\log n)}^2 / {\log s}}$ interactions and uses $\LDAUTheta{s + \log\log n}$ states (\cref{thm:majority_clocks}).
While this high-probability guarantee is comparatively weak with respect to the typical requirement of stabilization or even just convergence (since high-probability correctness allows for some low but positive probability of a permanent error), this protocol is an important building block for the following main results of this article.
\begin{enumerate}
\item

    We present two hybrid exact majority protocols, both having a runtime of $T = \LdauOmicron{n \cdot {(\log n)}^2 / {\log s}}$.
    One \emph{converges} \whplong/ in $T$ interactions and uses $\ldauTheta{s + \log\log n}$ states (\cref{thm:convergent-majority}).
    The other \emph{stabilizes} \whplong/ in $T$ interactions but uses $\LDAUTheta{s \cdot \log_s n}$ states (\cref{thm:majority_clocks_exact}).

\item

    For a constant $s$, we provide a uniform version of the second of the above two majority protocols.
    This protocol has essentially the same guarantee for the stabilization time.
    However, \whplong/ it uses $\LDAUOmicron{s \cdot \log_s n \cdot \log\log n}$ states (\cref{thm:majority:stabilizing:uniform}).

\end{enumerate}
All protocols above except for the uniform one need knowledge of $\floor{\log\log n}$.
Note that the state space of the uniform protocol is bounded only \whplong/; with negligible probability, an agent might need arbitrarily many states.
Since this is not covered in the original population model (where agents are \emph{finite}-state machines), for this protocol we adopt a generalized model~\cite{DBLP:conf/podc/DotyE19} in which agents are modeled as Turing machines (see \cref{sec:uniform-population-protocols}).

We highlight a few implications of the above results.
For a constant $s$, our majority results underline an important difference between stabilization and convergence.
While the $\LDAUTheta{\log n}$ number of states in our stable protocol (\cref{thm:majority_clocks_exact}) is asymptotically tight for any protocol that stabilizes \whplong/ in a subquadratic number of interactions\footnote{Conditioned on some natural properties satisfied by any known protocol, see~\cref{sec:related_literature}.}~\cite{DBLP:conf/soda/AlistarhAG18}, our protocol with $\LDAUTheta{\log\log n}$ states (\cref{thm:convergent-majority}) shows that the $\LDAUOmega{\log n}$  lower bound can be bypassed if one considers convergence instead of stabilization.

When choosing $s = \log\log n$, our majority protocols converge and stabilize \whplong/ in $\LdauOmicron{n \cdot {(\log n)}^2 / {\log\log\log n}}$ interactions.
These and the protocols presented in~\cite{DBLP:conf/wdag/BerenbrinkEFKKR18} are the first majority protocols with $\LDAUOmicron{\polylog n}$ states that work in $\Ldauomicron{n \cdot {(\log n)}^2}$ interactions.

When choosing $s = n^{\epsilon}$, where $\epsilon >0$ is an arbitrary positive constant, we obtain a majority protocol that stabilizes within asymptotically optimal $\LDAUOmicron{n \log n}$ interactions using $\LDAUTheta{n^{\epsilon}}$ states.
Before our work, achieving this optimal time required $\LDAUTheta{n^{3/2}}$ states~\cite{DBLP:conf/nca/MocquardAABS15}.

For a constant parameter $s$, our uniform protocol that stabilizes in $\LdauOmicron{n \cdot {(\log n)}^2}$ interactions and uses $\ldauOmicron{\log n \cdot \log\log n}$ states (\cref{thm:majority:stabilizing:uniform}) is the first uniform majority protocol that stabilizes in a subquadratic number of interactions, regardless of the required number of states.

An import ingredient for our results is an improvement to the \emph{phase clock} from~\cite{DBLP:conf/soda/GasieniecS18} – a distributed synchronization mechanism for population protocols.
Although this phase clock itself requires just a constant number of states, it is driven by a \emph{junta} of $n^{\epsilon}$ agents (for a constant $\epsilon \in \intco{0, 1}$), and selecting such a junta requires $\LDAUTheta{\log\log n}$ states.
By careful changes to the internals of the junta selection protocol and the interplay between the junta and the phase clocks, we not only simplify the phase clock protocol but also allow agents to \enquote{forget} some of the values required to select the junta.
This enables us to reduce the number of states required by our majority protocols from a \emph{factor} of $\LDAUTheta{\log\log n}$ to an \emph{additive} term of the same order.
See \cref{sec:phase_clock} for detailed explanations.

\subsection{Related Literature}%
\label{sec:related_literature}

The original population model was introduced by
\textcite{DBLP:conf/podc/AngluinADFP04, DBLP:journals/dc/AngluinADFP06},
assuming that the number of states per agent is constant. Together with
\textcite{AAE06, AAER07}, their results show that semilinear predicates (which
include, e.g., parity and majority) are stably computable in this model.
Subsequent results focused on quantifying the runtime and state requirements
for specific problems, in particular for the majority and the leader election
problems, and on generalizing the model. In the following overview we
concentrate on results in the population model for the majority problem. Bear
in mind that, as mentioned above, we state any runtime results in terms of the
required number of interactions, even when original sources state bounds in
parallel time only. For a broader overview of the extent of research and
results on protocols for the population model the reader is referred to the
survey papers~\cite{AspnesR2007} and~\cite{DBLP:journals/eatcs/ElsasserR18}.

\Textcite{DBLP:journals/dc/AngluinAE08} present a protocol with three states
and show that, \whplong/, the agents agree on the majority after
$\LDAUOmicron{n \log n}$ interactions if the initial difference between both
opinions (the \emph{absolute bias $\BiasAbsolute$}) is $\LDAUomega{\sqrt{n}
\log n}$. \Textcite{DBLP:conf/icalp/MertziosNRS14} show that, if agents are
required to succeed with probability $1$, at least four states are necessary.
They also provide a four state protocol that stabilizes \whplong/ in
$\LDAUOmicron{n^2 \log n}$ interactions. The same four state protocol was
independently (and earlier) studied by
\textcite{DBLP:journals/siamco/DraiefV12}, who proved similar results.
\Textcite{DBLP:conf/podc/AlistarhGV15} show a lower bound of $\LDAUOmega{n^2 /
\BiasAbsolute}$ on the expected interactions for any four state protocol. For
any number of states, they show a lower bound of $\LDAUOmega{n \log n}$
expected interactions.

To achieve fast runtime, \Textcite{DBLP:conf/nca/MocquardAABS15} consider the
population model allowing a super-constant number of states per agent. They
present a protocol that calculates the signed difference between the two
opinions' support \whplong/ in asymptotically optimal $\LDAUOmicron{n \log n}$
interactions but uses polynomial $\LDAUTheta{n^{3/2}}$ number of states. The
constant-state but slow quadratic-time
protocols~\cite{DBLP:journals/siamco/DraiefV12,DBLP:conf/icalp/MertziosNRS14}
on the one hand and the fast but polynomial-state
protocol~\cite{DBLP:conf/nca/MocquardAABS15} on the other, posed the quite
natural question of designing fast $\LDAUOmicron{n \polylog n}$-time majority
protocols which use a relatively small $\LDAUOmicron{\polylog n}$ number of
states.

\Textcite{DBLP:conf/soda/AlistarhAEGR17} show a lower bound on the required
number of interactions for population protocols with a small number of states.
For majority, their bound states that protocols with less than $(\log\log n)/2$
states require $\LDAUOmega{n^2 / \polylog(n)}$ interactions in expectation in
order to stabilize. \Textcite{DBLP:conf/soda/AlistarhAG18} further improve this
lower bound, by showing that any protocol that solves majority and stabilizes
in $n^{2 - \LDAUOmega{1}}$ expected interactions requires $\LDAUOmega{\log n}$
states. Both these lower bounds require certain natural monotonicity
assumptions which are satisfied by all known majority protocols.

A recent series of papers~\cite{DBLP:conf/soda/AlistarhAEGR17,
DBLP:conf/soda/AlistarhAG18, DBLP:conf/podc/AlistarhGV15,
DBLP:conf/podc/BilkeCER17, DBLP:conf/wdag/BerenbrinkEFKKR18} showed upper
bounds. \Textcite{DBLP:conf/soda/AlistarhAG18} present a protocol that
stabilizes \whplong/ in $\LdauOmicron{n \cdot {(\log n)}^2}$ interactions
and requires $\LDAUOmicron{\log n}$ states. In a recently published
result~\cite{DBLP:conf/wdag/BerenbrinkEFKKR18}, we present a population
protocol for majority that reduces the number of interactions to
$\LdauOmicron{n \cdot {(\log n)}^{5/3}}$, both in expectation and
\whplong/.

The subquadratic-time protocols for majority presented
in~\cite{DBLP:conf/soda/AlistarhAEGR17, DBLP:conf/soda/AlistarhAG18,
DBLP:conf/podc/AlistarhGV15, DBLP:conf/podc/BilkeCER17,
DBLP:conf/wdag/BerenbrinkEFKKR18, DBLP:conf/nca/MocquardAABS15} are not
uniform. To work correctly, they need an estimate of the size of the
population; more precisely, they need a value which is $\LDAUTheta{\log n}$.
They also, with exception of protocols proposed
in~\cite{DBLP:conf/podc/AlistarhGV15}, provide no means to trade runtime for
the number of states required per agent, as our protocols do.
\textcite{DBLP:conf/podc/AlistarhGV15} is the only paper we know of which
presents majority protocols with a trade-off of similar nature. For a parameter
$m \leq n$, their algorithm uses $\LDAUOmicron{m+ \log n \cdot \log m}$ states
and stabilizes \whplong/ in $\LdauOmicron{n^2 \cdot (\log n) / (\BiasAbsolute
\cdot m) + n \cdot {(\log n)}^2}$ interactions.

In parallel to our work, \textcite{DBLP:conf/podc/KosowskiU18} recently
designed population protocols, including two majority protocols that converge
in $\LdauOmicron{n{(\log n)}^3}$ and $\LDAUOmicron{n^{1 + \epsilon}}$
interactions and use $\LDAUOmicron{\log \log n}$ and constant $f(\epsilon)$
number of states, respectively. Here, $\epsilon$ is an arbitrarily chosen
positive constant.

With the only exception of~\cite{DBLP:journals/dc/AngluinAE08}, all majority protocols mentioned above solve exact majority.
That is, they eventually output the correct majority opinion with probability $1$.
This holds even if the initial bias towards one opinion is as small as only $1$.
}
{\section{Model \& Notation}%
\label{sec:model}

Population protocols are a computational model for a distributed system consisting of $n$ agents, in the following also referred to as \emph{nodes}.
Nodes are assumed to be identical finite-state machines\footnote{%
    For our uniform protocol, \cref{sec:uniform-population-protocols} introduces a generalized model where agents are Turing machines.
}.
In each time step, an ordered pair of nodes $(u, v)$ is chosen independently and uniformly at random.
Node $u$ is called the \emph{initiator} and node $v$ is called the \emph{responder}.
Let $s_u$ be the state of $u$ and $s_v$ be the state of $v$ at the beginning of such an \emph{interaction}.
Both nodes observe each other's state and update themselves according to a fixed, deterministic transition function of the form $(s_u, s_v) \mapsto (s_u', s_v')$.
At any time, the global state of the system can be fully described by a function $c$ that maps each node to its current state.
This function $c$ is called the \emph{configuration} of the system at that time.

Nodes try to reach and stay in a set of \emph{target configurations}, whose definition depends on the considered problem.
It is \emph{not} required, indeed not possible in this model, that nodes realize when a target configuration has been reached.
Target configurations are specified via an \emph{output function} of the form $s \mapsto o$ that maps a state $s$ to a (problem specific) output value $o$.

We are interested in population protocols for the majority problem, where nodes start in one of two different states (also called \emph{opinions}).
We seek a configuration in which all nodes agree on the opinion with the initially larger support.
The \emph{absolute bias} $\BiasAbsolute$ is the absolute difference between the initial number of supporters for each opinion.
We assume $\BiasAbsolute \geq 1$.
The output function maps each state $s$ to an output $o \in \set{+1, -1}$, representing one of the two opinions.
The target configurations are all configurations in which node states map all to $+1$, if $+1$ represents the initial majority opinion, or map all to $-1$, if $-1$ represents the initial majority.

The quality of a protocol is measured in terms of the number of interactions and the number of states per node required to reach and stay in target configurations.
There are two common ways to formalize what exactly is meant by \enquote{reach and stay}: \emph{stabilization time} and \emph{convergence time}.\footnote{%
    The notions as defined here are the ones used predominantly in population protocols in recent literature.
    However, note that some previous publications (e.g.,~\cite{DBLP:conf/soda/AlistarhAEGR17, DBLP:conf/podc/BilkeCER17}) refer to stabilization time as convergence time.
}
\begin{description}
\item[Convergence Time:]

    The \emph{convergence time} $\TimeC/$ of a protocol is the random variable that measures the number of interactions until the protocol has reached and remains in the set of target configurations.

\item[Stabilization Time:]

    We say a configuration $c$ is \emph{stable}, if in any configuration $c'$ that is reachable from $c$ by a sequence of interactions, each node has the same output as in $c$.
    The \emph{stabilization time} $\TimeST/$ of a protocol is the random variable that measures the number of interactions until the protocol has reached a stable target configuration.

\end{description}
Clearly, $\TimeC/ \leq \TimeST/$, since reaching a stable target configuration
implies that, whatever future interactions may be, the system will always
remain in a target configuration. The stabilization time $\TimeST/$ can,
however, be strictly larger than the convergence time $\TimeC/$.

As bounds on the convergence and stabilization time are given in probabilistic
terms, one often additionally emphasizes whether a protocol is guaranteed to,
eventually, reach a stable target configuration (i.e., whether $\TimeST/<
\infty$ holds with probability $1$). Such protocols are called \emph{exact} or
\emph{always correct}.

The newer results on population protocols, for example~\cite{DBLP:conf/soda/GasieniecS18, DBLP:conf/soda/AlistarhAG18}, tend to consider the stabilization time for exact protocols.
However, from a practical point of view, convergence may provide similarly strong runtime guarantees while enabling more efficient protocols.
Indeed, our \cref{thm:convergent-majority} shows that the lower bound on the number of states required by any majority protocol that \emph{stabilizes} in $n^{2 - \LDAUOmega{1}}$ expected interactions does not apply if one considers \emph{convergence} instead.

In the remainder of this article, we define $\N$ as the set of natural numbers
without zero and $\N_0 \coloneqq \N \cup \set{0}$.
}
{\section{Auxiliary Population Protocols}%
\label{sec:auxiliary}

In this section we introduce a few auxiliary population protocols that we use as subroutines.
These protocols, or variants of them, are well known and have been used in other work on population protocols, as indicated below.

We start with two comparatively simple primitives: \emph{One-way Epidemic} and \emph{Load Balancing}.
Afterward we proceed to describe two more involved protocols, one for the creation of a \emph{junta} (\cref{sec:junta_creation}) and one for the creation of a \emph{phase clock} (\cref{sec:phase_clock}), both of which require slight adaptions and rephrasing to fit into our setting.

\paragraph{One-way Epidemic}
A \emph{one-way epidemic} for $n$ nodes is a population protocol with state space $\set{0, 1}$ and transition function $(x, y) \mapsto (x, \max\set{x, y})$.
Nodes with value $0$ are referred to as \emph{susceptible} and nodes with value $1$ as \emph{infected}.
We define the \emph{infection time} \TimeINF/ as the number of interactions required by a one-way epidemic starting with a single infected node to infect the whole population.
The following upper and lower high-probability bounds on \TimeINF/ have been shown in~\cite{DBLP:journals/dc/AngluinAE08a}.
\begin{lemma}[{\cite[Lemma~2]{DBLP:journals/dc/AngluinAE08a}}]%
\label{lem:epidemic}
For any constant $a > 0$ there exist constants $c_1, c_2 > 0$ such that we have
the inequality
\begin{math}
     \Prob{c_1 \cdot n \log n \leq \TimeINF/ \leq c_2 \cdot n \log n}
\geq 1 - n^{-a}
\end{math}.
\end{lemma}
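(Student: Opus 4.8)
The plan is to analyze the one-way epidemic as a Markov chain on the number of infected nodes and bound the time spent in each "level" using the expected waiting time for the number of infected to increase.

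\textbf{Setup.} Let me denote by $X_t$ the number of infected nodes after $t$ interactions, so $X_0 = 1$ and $\TimeINF/$ is the first time $t$ with $X_t = n$. When exactly $k$ nodes are infected, a single interaction increases the infected count (by exactly one) precisely when the chosen ordered pair $(u,v)$ has $u$ infected and $v$ susceptible; this happens with probability $p_k = k(n-k)/(n(n-1))$, since the pair is ordered and the state space transition is $(x,y)\mapsto(x,\max\{x,y\})$, meaning only the responder can get infected and only if the initiator was already infected. Thus the number of interactions $W_k$ spent at level $k$ (before reaching level $k+1$) is geometrically distributed with success probability $p_k$, and $\TimeINF/ = \sum_{k=1}^{n-1} W_k$ with the $W_k$ mutually independent.

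\textbf{Upper bound.} The expected value is $\Expected{\TimeINF/} = \sum_{k=1}^{n-1} 1/p_k = \frac{n(n-1)}{n}\cdot\ldots$; more precisely $\Expected{\TimeINF/} = n(n-1)\sum_{k=1}^{n-1}\frac1{k(n-k)} = (n-1)\sum_{k=1}^{n-1}\bigl(\frac1k + \frac1{n-k}\bigr) = 2(n-1)H_{n-1} = \BigTheta{n\log n}$. For the high-probability upper bound I would split the sum into the "middle" levels, where $p_k = \BigTheta{1/n}$ and standard concentration for sums of independent geometrics (e.g.\ a Chernoff-type bound, or bounding the moment generating function of each $W_k$) gives concentration around the $\BigTheta{n\log n}$ mean, and the "extreme" levels $k \le n/\log n$ and $k \ge n - n/\log n$, where the $W_k$ are heavier-tailed but there are only $\BigO{n/\log n}$ of them; here I would bound $\Prob{W_k \ge c_2 n\log n / \text{something}}$ directly and union-bound, using that $\sum_{\text{extreme }k} 1/p_k = \BigTheta{n\log n}$ as well. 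Tuning the constants so that the total failure probability is at most $n^{-a}$ is routine.

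\textbf{Lower bound and main obstacle.} For the lower bound I would again focus on the extreme levels, say $k \le \sqrt{n}$ (or any slowly growing range), where $p_k \le 2k/n$, so $\Expected{W_k} \ge n/(2k)$ and $\sum_{k\le \sqrt n} 1/p_k = \BigOmega{n\log n}$; since these are independent geometrics with small success probabilities, their sum is concentrated from below (a geometric with parameter $p$ is at least $\Omega(1/p)$ with constant probability, and summing many independent such variables concentrates). Alternatively, and perhaps cleaner, one can observe $X_t \le 1 + \sum_{i=1}^{t} B_i$ where $B_i$ are Bernoulli indicators that interaction $i$ was infection-spreading, each with probability at most $\max_k p_k \le 1/2$ — but this only gives a linear, not an $n\log n$, lower bound, so the level-by-level argument is needed. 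The main obstacle is the concentration of $\sum W_k$ in the regime where individual $p_k$ are tiny: a naive Chernoff bound does not apply to geometric variables with vastly different parameters, so I expect the technical heart of the proof to be a careful MGF computation $\Expected{e^{\lambda W_k}} = p_k/(1-(1-p_k)e^\lambda)$ valid for $\lambda < -\ln(1-p_k)$, combined with choosing $\lambda$ of order $1/(n\log n)$ to make the product over all $k$ controllable. Since \cref{lem:epidemic} is quoted verbatim from~\cite{DBLP:journals/dc/AngluinAE08a}, I would ultimately cite that source rather than reproduce the full concentration argument.
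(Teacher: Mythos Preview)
The paper does not prove this lemma at all; it is stated with the citation \cite[Lemma~2]{DBLP:journals/dc/AngluinAE08a} and used as a black box, with no accompanying proof or proof sketch in the text or the appendix. Your proposal correctly anticipates this in its final sentence.

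As for the content of your sketch: the decomposition $\TimeINF/ = \sum_{k=1}^{n-1} W_k$ into independent geometrics with parameters $p_k = k(n-k)/(n(n-1))$ is the standard and correct way to analyze the one-way epidemic, and your computation $\Ex{\TimeINF/} = 2(n-1)H_{n-1}$ is right. The outlined concentration arguments (MGF for sums of heterogeneous geometrics, or splitting into middle and extreme ranges) are exactly the techniques used in the original source. There is no gap here; you have simply supplied more detail than the paper itself does.
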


\paragraph{Load Balancing}
We define a simple population protocol for \emph{load balancing} over $n$ nodes.
The state space is $\set{-\Lambda, -(\Lambda-1), \dots, \Lambda-1, \Lambda}$, where $\Lambda \in \N$ is a positive integer (which may depend on $n$).
We say a node in state $x$ has \emph{load} $x$.
The transition function is $(x, y) \mapsto \bigl( \ceil*{\frac{x+y}{2}}, \floor*{\frac{x+y}{2}} \bigr)$.
Let $\Delta(t)$ denote the \emph{discrepancy} after $t$ interactions, which is the difference between the maximum and minimum load among all nodes, and set $\Delta \coloneqq \Delta(0)$.
We define the \emph{load balancing time} \TimeLB/ as the number of interactions required to reduce the initial discrepancy to at most $2$.
The following upper high-probability bound on \TimeLB/ has been shown in~\cite{journals/corr/BerenbrinkFKK18}.
\begin{lemma}[{\cite[from Theorem~1]{journals/corr/BerenbrinkFKK18}}]%
\label{lem:loadbalancing}
For any constant $a > 0$, there exists a constant $c > 0$ such that we have the
inequality
\begin{math}
     \Prob{\TimeLB/ \leq c \cdot n \log(n \cdot \Delta)}
\geq 1 - n^{-a}
\end{math}.
\end{lemma}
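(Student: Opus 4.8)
The plan is to track the quadratic potential $\Phi(t)\coloneqq\sum_{v}\bigl(x_v(t)-\mu\bigr)^2$, where $x_v(t)$ is the load of node $v$ after $t$ interactions and $\mu\coloneqq\tfrac1n\sum_v x_v(0)$ is the average load (invariant, since every interaction preserves the sum of the two participating loads). I would split the analysis into a \emph{coarse} phase, which brings $\Phi$ down to $\LDAUOmicron{n}$ and hence the discrepancy down to $\LDAUOmicron{\sqrt n}$, and a \emph{fine} phase, which brings the discrepancy down to a constant.

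For the coarse phase, first record the one-step behaviour of $\Phi$: if the pair $\set{u,w}$ interacts, its contribution to $\Phi$ decreases by exactly $\tfrac12(x_u-x_w)^2$ when $x_u+x_w$ is even and by $\tfrac12\bigl((x_u-x_w)^2-1\bigr)$ when it is odd; in particular $\Phi$ is non-increasing, and the decrease is at least $\tfrac14(x_u-x_w)^2$ whenever $\lvert x_u-x_w\rvert\ge2$. Since the interacting unordered pair equals $\set{u,w}$ with probability $2/n^2$, since $\sum_{u<w}(x_u-x_w)^2=n\,\Phi(t)$, and since the pairs with $\lvert x_u-x_w\rvert\le1$ contribute at most $\binom n2$ to that sum, we get
\[
  \Ex{\Phi(t)-\Phi(t+1)\mid\mathcal F_t}\;\ge\;\frac1{2n^2}\Bigl(n\,\Phi(t)-\binom n2\Bigr)\;\ge\;\frac{\Phi(t)}{2n}-\frac14 ,
\]
so while $\Phi(t)\ge n$ we have multiplicative drift $\Ex{\Phi(t+1)\mid\mathcal F_t}\le(1-\tfrac1{4n})\Phi(t)$. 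As $\Phi(0)\le n\Delta^2$, the multiplicative-drift theorem with tail bounds yields a constant $c_1=c_1(a)$ such that, with probability at least $1-n^{-a-1}$, within $c_1\,n\log(n\Delta)$ interactions $\Phi$ drops below $n$ and then stays below $n$ forever. From $\Phi(t)<n$ we read off that from then on at least $n-n/K^2$ nodes have load within $K$ of $\mu$, for any fixed $K$, and that the discrepancy is below $2\sqrt n$.

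The fine phase works with the convex potentials $Q_c(t)\coloneqq\sum_v\bigl(\max\set{x_v(t)-c,\,0}\bigr)^2$ (and their mirror images). Each $Q_c$ is non-increasing, since $z\mapsto(\max\set{z-c,0})^2$ is convex and the averaging step replaces the two participating loads by a pair they majorise; moreover $Q_c(t)=0$ precisely when $\max_v x_v(t)\le c$. I would fix $c\coloneqq\lceil\mu\rceil+K$ with $K$ the constant from the coarse phase, so that at all later times at least $0.99n$ nodes have load at most $c-1$. A short computation shows that whenever a node of load $>c$ interacts with such a \enquote{low} node, the pair's contribution to $Q_c$ drops by at least $\tfrac12(x_v-c)^2$ — intuitively, the overshoot $x_v-c$ is at least halved. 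Since, for each overshooting node, such a partner is drawn with probability $\LDAUOmega{1/n}$, and the corresponding events for distinct overshooting nodes are mutually exclusive within one step, summing gives $\Ex{Q_c(t)-Q_c(t+1)\mid\mathcal F_t}\ge\tfrac1{2n}\,Q_c(t)$. As $Q_c(t)\le\Phi(t)<n$ once the coarse phase has ended, the multiplicative-drift theorem again gives $Q_c=0$ — i.e.\ $\max_v x_v\le c=\mu+\LDAUOmicron{1}$ — within $\LDAUOmicron{n\log n}$ further interactions with probability $1-n^{-a-1}$; symmetrically $\min_v x_v\ge\mu-\LDAUOmicron{1}$ in the same time, so the discrepancy is $\LDAUOmicron{1}$.

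The step I expect to be the (minor) sticking point is sharpening this $\LDAUOmicron{1}$ discrepancy to the stated value $2$. The obstacle is that an interaction between an overshooting node and a low node at value exactly $c-1$ may merely \emph{swap} the two loads, leaving $Q_c$ unchanged, so for a tight threshold $c$ one needs a linear fraction of nodes strictly below $c-1$; this forces a small case analysis pinning $c$ to whichever of $\lfloor\mu\rfloor,\lceil\mu\rceil$ is \enquote{popular}, together with the book-keeping for non-integer $\mu$. This endgame is exactly what the folklore analyses of~\cite{DBLP:conf/focs/SauerwaldS12,DBLP:conf/nca/MocquardAS16} handle, and can also be invoked directly. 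Finally, combining the two phases (the case $\Delta\le2$ being trivial) and a union bound over the $\LDAUOmicron{1}$ failure events, each of probability at most $n^{-a-1}$, gives $\Prob{\TimeLB/\le c\cdot n\log(n\Delta)}\ge1-n^{-a}$ for a suitable constant $c=c(a)$.
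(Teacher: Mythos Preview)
The paper does not actually prove this lemma: it is stated as a folklore result and simply cites~\cite{DBLP:conf/focs/SauerwaldS12, DBLP:conf/nca/MocquardAS16}. So there is no paper-proof to compare against.

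Your approach---multiplicative drift on the quadratic potential $\Phi$ for the coarse phase, then one-sided convex potentials $Q_c$ for the fine phase---is exactly the standard argument underlying those references, and your drift computations are correct. In particular, the identity $\sum_{u<w}(x_u-x_w)^2=n\Phi(t)$, the bound $\Ex{\Phi(t+1)\mid\mathcal F_t}\le(1-\tfrac1{4n})\Phi(t)$ while $\Phi(t)\ge n$, and the observation that $\Phi$ is non-increasing (so the coarse-phase conclusion is permanent) are all sound. The fine-phase drift $\Ex{Q_c(t+1)\mid\mathcal F_t}\le(1-\LDAUOmega{1/n})Q_c(t)$ also goes through once a constant fraction of nodes sit strictly below $c$, and you correctly flag that the last step---pushing the discrepancy from $\LDAUOmicron{1}$ down to exactly $2$---needs the small case analysis around $\lfloor\mu\rfloor$ versus $\lceil\mu\rceil$, which is precisely the endgame handled in the cited works. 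Deferring that detail to the references is in keeping with what the paper itself does.
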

}
{\subsection{Junta}%
\label{sec:junta_creation}

The next protocol rapidly elects a non-empty junta of size at most $n^{1 -
\LDAUOmega{1}}$. It is a variant of a protocol
from~\cite{DBLP:conf/soda/GasieniecS18}, where each node calculates a
\emph{level} from a range of $\LDAUTheta{\log\log n}$ values and all nodes with
the highest level form the junta. The original protocol is modified such that
we can not only provide an upper bound on the highest level reached by any node
(as in~\cite{DBLP:conf/soda/GasieniecS18}) but also a lower bound. This
change also simplifies the protocol and allows the nodes to realize when the
junta selection has finished. Thus, in contrast
to~\cite{DBLP:conf/soda/GasieniecS18}, nodes are not required to store their
level ad infinitum. This is important when using the junta selection as a
subprotocol, as storing the level would then increase the number of states per
node by a \emph{factor} of $\LDAUTheta{\log\log n}$.

We first describe in \cref{sec:level-calculation} how the levels are
calculated. Then we continue to describe how this level calculation can be used
to calculate a junta with the desired properties and state the main result for
the junta election process in \cref{sec:junta-calculation}. The protocol's
analysis is given in \cref{sec:junta-auxiliary, sec:junta-analysis}.

\subsubsection{Level Calculation}\label{sec:level-calculation}

For the level calculation, the state of a node is a tuple of the form $(l, a)$,
where the \emph{level} $l \in \N_0$ is a counter and the \emph{activity} bit $a
\in \set{0, 1}$ indicates whether a node is active or not. Initially, all nodes
have state $(l, a) = (0, 1)$. That is, they are at level $0$ and active.

To describe the transition function, we distinguish between a node's first
interaction and any of its subsequent interactions. During its first
interaction, a node $u$ adopts state $(1, 1)$ if it is the initiator and state
$(0, 0)$ if it is the responder. Since the interacting nodes are chosen
randomly, this simulates a fair coin toss to decide whether the node should
remain active and move up to level $1$, or whether it should become inactive.

During any following interaction, $u$ changes its state only if it is still
active ($a = 1$) and if it is the initiator of the interaction. In this case,
when interacting with a responder in state $(l', a')$, node $u$ updates its
state as follows:
\begin{equation}%
\label{eq:level-calculation}
\bigl[ (l, 1), (l', a') \bigr] \mapsto
\begin{cases}
    (l + 1, 1)
&   \text{if $l' \geq l$ and}
\\
    (\phantomas[l]{l+1}{l}, 0)
&   \text{otherwise.}
\end{cases}
\end{equation}
In words, a node remains active and increases its level as long as it does not
encounter a node with a lower level. The only difference to the protocol
from~\cite{DBLP:conf/soda/GasieniecS18} is how nodes behave in their first
interaction, which allows us to provide a lower bound on the maximum level
reached by any node (\cref{lem:junta}). We use the random variable $L^*$ to
denote this maximum level. Moreover, for $l \in \N_0$ we define $B_l$ as the
number of nodes that reach level at least $l$ before becoming inactive.

\subsubsection{Junta Calculation}%
\label{sec:junta-calculation}

We now describe how the above level calculation can be used to calculate a suitable junta.
In addition to the level $l$ and activity bit $a$, each node stores a \emph{marker bit} $b \in \set{0, 1}$ that indicates whether the node is a member of the junta ($b = 1$) or not ($b = 0$) and a \emph{finished bit} $f \in \set{0, 1}$ that indicates whether a node knows that there is at least one marked node ($f = 1$) or not ($f = 0$).
Initially, all nodes have $b = 0$ and $f = 0$.
If two nodes with finished bit $0$ interact, they update their levels as described in~\cref{sec:level-calculation}.
Any node that reaches level $\lmax \coloneqq \floor{\log\log n} - 3$ sets its marker bit $b = 1$ and its finished bit $f = 1$.
If two nodes interact and at least one of them has its finished bit set to $1$, both nodes set their finished bit to $1$; no further state updates happen in this case.

We refer to this protocol as \ProtocolFormJuntaExt/.
An important difference to the junta protocol from~\cite{DBLP:conf/soda/GasieniecS18} is under which circumstances a node is assumed to be part of the junta.
While our protocol starts with an empty junta and marks a node as part of the junta when it reaches level $\lmax$, the protocol from~\cite{DBLP:conf/soda/GasieniecS18} assumes that a node is in the junta as long as it has not encountered a node with a higher level.
In particular, initially the junta from~\cite{DBLP:conf/soda/GasieniecS18} has linear size and decreases gradually over time.
Protocols using a junta typically rely on a junta of size at most $n^{1 - \LDAUOmega{1}}$.
Dealing with the initially oversize junta requires some care, a difficulty avoided by our protocol.
Another benefit of our protocol is that once a node sets its finished bit, its level value (and activity bit) are no longer of any relevance and need not be stored any longer.
These benefits come with the caveat that our protocol may not finish with a non-zero (but, as we will show, negligible) probability.
That is, it is possible that no node is ever marked/finished.

The remainder of this section proves the following \lcnamecref{thm:junta}.
\begin{theorem}%
\label{thm:junta}
Fix any constant $a > 0$ and let $n$ be sufficiently large with respect to $a$.
With probability at least $1 - n^{-a}$, protocol \ProtocolFormJuntaExt/ calculates a non-empty junta (with all nodes being finished) of size at most $n^{\chosenoneminusxi/}$ within $\LDAUOmicron{n \log n}$ interactions.
It uses $2 \cdot (\lmax + 1) = \LDAUTheta{\log\log n}$ states per node.
Finished nodes are in one of exactly two states, indicating whether the node is in the junta or not.
\end{theorem}
Note that our analysis of \cref{thm:junta} is not designed to keep the involved constants small but instead to make the asymptotic analysis as clear as possible.
For example, the current, simple asymptotic analysis wold require an exorbitant large value for $n$ ($\geq e^{800}$).
These numbers arise simply out of convenient choices and it is not difficult (if tedious) to improve them to more realistic values.
In fact, simple experimental simulations show that these protocols work already well in practice for values of $n \geq 10^6$.

\subsubsection{Auxiliary Claims about the Level Calculation}%
\label{sec:junta-auxiliary}

In this section we state and prove some auxiliary \lcnamecrefs{clm:junta:aux1} about the level calculation described in \cref{sec:level-calculation}.
We start with upper and lower bounds on the number $B_1$ of nodes that proceed from level $0$ to level $1$ (\cref{clm:junta:aux1}).
Afterward, we provide both upper and lower bounds on the number of nodes that proceed from level $l$ to level $l + 1$ for $l \in \N$ (\cref{clm:junta:aux2}).
Finally, we bound how many levels nodes can proceed beyond any level that is reached by at most $\LDAUOmicron{n^{1/3}}$ nodes (\cref{clm:junta:aux3}).
\begin{claim}%
\label{clm:junta:aux1}
Fix any two constants $a, \epsilon > 0$ and let $n$ be sufficiently large with respect to $a$ and $\epsilon$.
Then,
\begin{math}
     \Prob{\abs{B_1 - n/2} < \epsilon \cdot n/2}
\geq 1 - n^{-a}
\end{math}.
\end{claim}
\begin{proof}
For a node $u$ let the \emph{first interaction} $t_u$ of $u$ denote the earliest interaction during which $u$ was either initiator or responder.
We say $u$ is a \emph{singleton} if $t_u \neq t_v$ for all nodes $v \neq u$.
Two nodes $u \neq v$ with $t_u = t_v$ are called twins.
Let $\cS$ denote the set of all singletons and $\cT$ the set of all nodes that have a twin.

For each node $u$ we define the binary random variable $X_u$ to be $1$ if and only if $u$ is the initiator of $t_u$.
Note that $\Prob{X_u = 1} = 1/2$ and that $B_1 = \sum_{u} X_u$.
We would like to treat $B_1$ as a binomial distribution $\BinDistr(n, 1/2)$.
Unfortunately, the variables $X_u$ are not independent: for twins $u$ and $v$, exactly one of $X_u$ and $X_v$ is $1$.
To fix this, define $K \in \set{1, 2, \dots, \floor{n/2}}$ as the number of pairs $u$ and $v$ that are twins and let us condition on a fixed $K = k$.
The $n - 2k$ variables $X_u$ with $u \in \cS$ are completely independent of the remaining process (a node becomes initiator or responder independently with probability $1/2$).
For the $2k$ variables corresponding to twins, note that their sum is exactly $k$.
We pick an arbitrary subset $\cT_1 \subseteq \cT$ of $k$ twins and define $X'_u \coloneqq 1$ for all $u \in \cT_1$ as well as $X'_u \coloneqq 0$ for all $u \in \cT \setminus \cT_1$.
For $u \in \cS$, we define $X'_u \coloneqq X_u$.
Obviously, we have $B_1 = \sum_{u} X_u = \sum_{u} X'_u$ and the set of all $X'_u$ is mutually independent.
Moreover, $\Ex{B_1 | K = k} = k \cdot 1 + k \cdot 0 + (n - 2k)/2 = n/2$.
For any constant $b > 0$, Chernoff (\cref{eq:chernoff:mult:combined}) gives
\begin{equation*}
     \Prob{\abs{B_1 - n/2} \geq \delta \cdot n/2 | K = k}
\leq 2n^{-b}
,
\end{equation*}
where $\delta \coloneqq \sqrt{6b \cdot \log(n) / n} = \LDAUomicron{1}$.
Using the law of total probability to get rid of the conditioning yields the inequality
\begin{math}
     \Prob{\abs*{B_1 - n/2} \leq \epsilon \cdot n/2}
\geq 1 - 2n^{-b}
\end{math},
which implies the \lcnamecref{clm:junta:aux1}'s statement by choosing the constant $b = a + 1$.
\end{proof}

\begin{claim}\label{clm:junta:aux2}
Fix any two constants $a > 0$ and $\epsilon \in \intoc{0, 1}$ and let $n$ be sufficiently large with respect to $a$ and $\epsilon$.
For all $l \in \N$, $\xi_U \in \intco{n^{-1/3}, 1}$, and $\xi_L \in
\intco{n^{-1/2} \ln n, 1}$ we have
\begin{enumerate}[nosep]
\item\label{clm:junta:aux2:a}
    \begin{math}
         \Prob{B_{l+1} < (1 + \epsilon) \xi_U^2 \cdot \phantomas{n/4}{n} | B_l \leq \xi_U \cdot n}
    \geq 1 - n^{-a}
    \end{math}
    and

\item\label{clm:junta:aux2:b}
    \begin{math}
         \Prob{B_{l+1} > (1 - \epsilon) \xi_L^2 \cdot n/4 | B_l \geq \xi_L \cdot n}
    \geq 1 - n^{-a}
    \end{math}.

\end{enumerate}
\end{claim}
\begin{proof}
Fix an $l \in \N$ and consider a node $u$ that just reached level $l$. Node
$u$ is still active and will either become inactive or proceed to level $l
+ 1$ during its next interaction. Let $t$ be $u$'s next interaction.
\begin{enumerate}[nosep, wide]
\item

     The probability for $u$ to proceed to level $l + 1$ during interaction $t$
     is at most $B_l / n$. This holds for all $B_l$ nodes that reach level at
     least $l$. By a straightforward coupling argument\footnote{
        Run the original process and mark all nodes that reach level $l$. Then
        run the coupled process and use the same random choices. Proceeding
        from level $l'$ to $l' + 1$ for $l' \in \N_0 \setminus \set{l}$ works
        as in the original process. However, for a node to proceed from level
        $l$ to $l + 1$ its interaction partner must have been marked in the
        original process.
    }, we get that $B_{l+1}$ is stochastically dominated by a binomially
    distributed random variable $\BinDistr(B_l, B_l/n)$. Conditioned on $B_l
    \leq \xi \cdot n$ we can apply Chernoff (\cref{eq:chernoff:mult:b}) to get
    \begin{equation}
    \begin{aligned}
    &
    \Prob{B_{l+1} \geq (1 + \epsilon) \cdot \xi^2 \cdot n | B_l \leq \xi \cdot n}
    \\{}\leq{}&
    e^{-\frac{\epsilon^2 \cdot \xi^2 \cdot n}{3}}
    \leq
    e^{-\frac{\epsilon^2 \cdot n^{1/3}}{3}}
    ,
    \end{aligned}
    \end{equation}
    implying the desired statement.

\item

    If $u$ is among the last $B_l/2$ nodes that try to proceed from level $l$
    to level $l + 1$, its probability to proceed to level $l+1$ is at least
    $B_l / (2n)$. By a straightforward coupling argument\footnote{
        Run the original process and let $b$ denote the number of nodes that
        reach level $l$. Mark the first $b/2$ nodes that try to proceed from
        level $l$ to level $l + 1$. Then run the coupled process and use the
        same random choices. Proceeding to the next level works as in the
        original process, except for the last $b/2$ nodes that try to proceed
        from level $l$ to level $l + 1$: such nodes proceed only if their
        interaction partner has been marked in the original process.
    } shows that $B_{l+1}$ stochastically dominates a binomially distributed
    random variable $\BinDistr(B_l/2, B_l/(2n))$. Conditioned on $B_l \geq \xi
    \cdot n$ we can apply Chernoff (\cref{eq:chernoff:mult:a}) to get
    \begin{equation}
    \begin{aligned}
    &
    \Prob{B_{l+1} \leq (1 - \epsilon) \cdot \xi^2 \cdot n/4 | B_l \geq \xi \cdot n}
    \\{}\leq{}&
    e^{-\frac{\epsilon^2 \cdot \xi^2 \cdot n/4}{2}}
    \leq
    e^{-\frac{\epsilon^2 \cdot {(\ln n)}^2}{8}}
    ,
    \end{aligned}
    \end{equation}
    implying the desired statement.
    \qedhere

\end{enumerate}
\end{proof}

\begin{claim}%
\label{clm:junta:aux3}
Fix any integer constant $a \ge 1$ and let $n$ be sufficiently large.
For all $l \in \N$, we have
\begin{equation}
     \Prob{B_{l + 4a} = 0 | B_l < 2n^{1/3}}
\geq 1 - n^{-a}
.
\end{equation}
\end{claim}
\begin{proof}
Note that $B_l < 2n^{1/3}$ implies $B_{l'} \leq B_l < 2n^{1/3}$ for all $l'
\geq l$. By Markov's inequality, we have
\begin{equation}
\begin{aligned}
&
\Prob{B_{l' + 1} \geq 1 | B_{l'} < 2n^{1/3}}
\\{}\leq{}&
\Ex{B_{l'+1} | B_{l'} < 2n^{1/3}}
\leq
4n^{-1/3}
.
\end{aligned}
\end{equation}
We apply Markov's inequality to the next $4a$ levels and get
\begin{math}
     \Prob{B_{l + 4a} \geq 1 | B_l < 2n^{1/3}}
\leq {(4n^{-1/3})}^{4a}
\leq n^{-a}
\end{math}.
\end{proof}

\subsubsection{Analysis of the Junta Calculation}\label{sec:junta-analysis}

Equipped with the auxiliary \lcnamecrefs{clm:junta:aux1} from
\cref{sec:junta-auxiliary}, we continue with the analysis of the junta
calculation. First, we bound the time it takes until all nodes become inactive
(\cref{lem:junta:runtime}). Next, we give upper and lower bounds on the maximum
level $L^*$ reached by the nodes (\cref{lem:junta}) as well as an upper bound
on the number $B_{\lmax}$ of nodes that reach level $\lmax$
(\cref{lem:junta:nodesatlowerbound}). Finally, the proof of \cref{thm:junta} is
given at the end of this \lcnamecref{sec:junta-analysis}.
\begin{lemma}%
\label{lem:junta:runtime}
Fix any constant $a > 0$ and let $n$ be sufficiently large with respect to $a$.
With probability at least $1 - n^{-a}$ all nodes become inactive during the first $(6a + 12) \cdot n \ln n$ interactions.
\end{lemma}
\begin{proof}
The probability that a given node does not interact in a given interaction is
$1 - 1/n$. Thus, the probability that a given node does not interact at all
during the first $c \cdot n \ln n$ interactions is at most ${(1 - 1/n)}^{c
\cdot n \ln n} \leq n^{-c}$ for any $c > 0$. By a union bound, we get that all
nodes interacted at least once after the first $c \cdot n \ln n$ interactions
with probability at least $1 - n^{-c + 1}$. Together with \cref{clm:junta:aux1}
and a union bound, we know that, with probability $1 - 2n^{-c + 1}$, there are
at least $n/3$ nodes in state $(0, 0)$ after $c \cdot n \ln n$ interactions.

From that point on, the probability for any fixed node to become inactive
during a given interaction is at least $\frac{1}{3n}$ (the node is chosen as
the initiator of the interaction and its communication partner is one of the
$n/3$ nodes in state $(0, 0)$). Thus, the probability that any fixed node
remains active during the next $c \cdot n \ln n$ interactions is at most ${(1 -
1/(3n))}^{c \cdot n \ln n} \leq n^{-c/3}$. By a union bound, all nodes become
inactive during the next $c \cdot n \ln n$ interactions with probability at
least $1 - n^{-c/3 + 1}$. Combining, we get that all nodes become inactive
within $2c \cdot n \ln n$ interaction with probability at least $1 - 2n^{-c +
1} - n^{-c/3 + 1} \geq 1 - 3n^{-c/3 + 1}$. We can make this probability to be
at least $1 - n^{-a}$ by choosing $c = 3a + 6$.
\end{proof}

\begin{lemma}%
\label{lem:junta}
Fix any constant $a > 0$ and let $n$ be sufficiently large with respect to $a$.
With probability at least $1 - n^{-a}$ we have
\begin{equation}
     \floor{\log\log n} - 3
\leq L^*
\leq \log\log n + 4 \cdot (a + 1)
.
\end{equation}
\end{lemma}
\begin{proof}
Let $\delta \coloneqq 1/10$, $\hat{\xi}_0 = \check{\xi}_0 \coloneqq 1$, and
define for $l \in \N$
\begin{equation}
\begin{aligned}
&
          \hat{\xi}_l
\coloneqq {(1 + \delta)}^{2^l - 1} \cdot 2^{-2^{l-1}}
\\\text{and}\quad&
\check{\xi}_l
\coloneqq {(1 - \delta)}^{2^l - 1} \cdot 2^{-3 \cdot 2^{l-1} + 2}
.
\end{aligned}
\end{equation}
Note that $\hat{\xi}_l$ and $\check{\xi}_l$ are monotonically decreasing in $l$
and that for $l \in \N_0$ we have
\begin{math}
  \hat{\xi}_{l+1}
= (1 + \delta) \cdot \hat{\xi}_l^2
\end{math}
and
\begin{math}
  \check{\xi}_{l+1}
= (1 - \delta) \cdot \check{\xi}_l^2/4
\end{math}.

For the upper bound on $L^*$, apply \cref{clm:junta:aux1} and
\cref{clm:junta:aux2}.\ref{clm:junta:aux2:a}, to get that, for any $l \in \N$
with $\hat{\xi}_{l-1} \geq n^{-1/3}$ and for any constant $a > 0$,
\begin{equation}%
\label{eq:lem:junta:1}
     \Prob{B_l < \hat{\xi}_l \cdot n | B_{l-1} \leq \hat{\xi}_{l-1} \cdot n}
\geq 1 - n^{-a-1}
.
\end{equation}
(Note that, since $\hat{\xi}_0 = 1$ and $B_0 = n$, the conditioning is void for
$l = 1$.) Since $\hat{\xi}_l < n^{-1/3}$ for $l \geq \log\log n$, we can apply
\cref{eq:lem:junta:1} iteratively to see that there is an $l \leq \log\log n$
such that $\Prob{B_l < n^{2/3}} \geq 1 - l \cdot n^{-a-1}$. Together with
another application of \cref{clm:junta:aux2}.\ref{clm:junta:aux2:a}, we get an
$l \leq \log\log n + 1$ such that $\Prob{B_l < (1 + \delta) \cdot n^{1/3}} \geq
1 - l \cdot n^{-a-1}$. Combined with \cref{clm:junta:aux3} we get an $l \leq
\log\log n + 1 + 4 \cdot (a+1)$ such that $\Prob{B_l = 0} \geq 1 - l \cdot
n^{-a-1}$.

For the lower bound on $L^*$, similarly apply \cref{clm:junta:aux1} and
\cref{clm:junta:aux2}.\ref{clm:junta:aux2:b} to get that, for any $l \in \N$
with $\check{\xi}_{l-1} \geq n^{-1/3}$ and for any constant $a > 0$,
\begin{equation}%
\label{eq:lem:junta:2}
     \Prob{B_l > \check{\xi}_l \cdot n | B_{l-1} \geq \check{\xi}_{l-1} \cdot n}
\geq 1 - n^{-a-1}
\end{equation}
(As above, since $\check{\xi}_0 = 1$ and $B_0 = n$, the conditioning is void
for $l = 1$.) Since $\check{\xi}_l \geq n^{-1/3}$ for all $l \leq \log\log n -
3$, we can apply \cref{eq:lem:junta:2} iteratively to see that, for $l =
\floor{\log\log n} - 3$, $\Prob{B_l > n^{2/3}} \geq 1 - l \cdot n^{-a-1}$.

The \lcnamecref{lem:junta}'s statement follows via a union bound.
\end{proof}

\begin{lemma}%
\label{lem:junta:nodesatlowerbound}
Fix any constant $a > 0$ and let $n$ be sufficiently large with respect to $a$.
Then we have the bound
\begin{math}
     \Prob{B_{\lmax} < n^{\chosenoneminusxi/} }
\geq 1 - n^{-a}
\end{math}.
\end{lemma}
\begin{proof}
Define $\delta$ and $\hat{\xi}_l$ as in the proof of \cref{lem:junta}.
By their definition and since $\lmax = \floor{\log\log n} - 3$, we have for any
$n \in \N \setminus \set{1}$
\begin{equation}%
\label{eq:lem:junta:nodesatlowerbound:1}
\begin{aligned}
&
\hat{\xi}_{\lmax}
=
{(1 + \delta)}^{2^{\lmax} - 1} \cdot 2^{-2^{{\lmax} - 1}}
\\{}\leq{}&
{(1 + \delta)}^{2^{\log\log n - 4} - 1} \cdot 2^{-2^{\log\log n - 4 - 1}}
\\{}={}&
\frac{1}{1 + \delta} \cdot {(1 + \delta)}^{\log(n) / 16} \cdot 2^{-\log(n) / 32}
\\{}={}&
\frac{1}{1 + \delta} \cdot n^{\log(1 + \delta) / 16} \cdot n^{-1/32}
\\{}={}&
\frac{1}{1 + \delta} \cdot n^{\frac{2\log(1 + \delta) - 1}{32}}
<
n^{-\chosenxi/}
.
\end{aligned}
\end{equation}

Let $\epsilon \coloneqq 1 - \chosenxi/ = \chosenoneminusxi/$. Analogously to
the proof of \cref{lem:junta}, we have for any $l \in \N$ with $\hat{\xi}_{l-1}
\geq n^{\epsilon - 1}$ and for any constant $a > 0$
\begin{equation}%
\label{eq:lem:junta:nodesatlowerbound:2}
     \Prob{B_l < \hat{\xi}_l \cdot n | B_{l-1} \leq \hat{\xi}_{l-1} \cdot n}
\geq 1 - n^{-a-1}
.
\end{equation}
Since $\hat{\xi}_l < n^{\epsilon - 1}$ for $l \geq \lmax$ (by
\cref{eq:lem:junta:nodesatlowerbound:1} and by the monotonicity of
$\hat{\xi}_l$), we can apply \cref{eq:lem:junta:nodesatlowerbound:2}
iteratively to see that there is an $l \leq \lmax$ such that $\Prob{B_l <
n^{\epsilon}} \geq 1 - l \cdot n^{-a-1} \geq 1 - l \cdot n^{-a}$. This implies
the \lcnamecref{lem:junta:nodesatlowerbound}'s statement.
\end{proof}

\begin{proof}[Proof of~\cref{thm:junta}]
We first prove the bound on the runtime. \Cref{lem:junta:runtime} states that,
\whplong/, all nodes become inactive within $\LDAUOmicron{n \log n}$
interactions. \Cref{lem:junta} states that, \whplong/, at least one node
reaches level $\lmax$ and, thus, sets its marked and finished bits.
\Cref{lem:junta:nodesatlowerbound} states that, \whplong/, at most
$n^{\chosenoneminusxi/}$ nodes reach level $\lmax$. Finally, by
\cref{lem:epidemic} the finished bit (which spreads via a one-way epidemic) is,
\whplong/, set in all nodes after $\LDAUOmicron{n \log n}$ additional
interactions. A union bound over all these results yields the desired runtime
bound.

The number of states per node required for \ProtocolFormJuntaExt/ is
\begin{equation}
       \undervert{2}{activity bit}
\times \quad\undervert{\lmax}{level}
\quad+\quad
        \undervert{2}{marker bit}
.
\end{equation}
Note that a node's activity bit and level counter become irrelevant once its
finished bit is set (which happens at latest when reaching level $\lmax$).
Thus, when a node's finished bit is set, it leaves the $2\lmax$ states that
store the activity bit and the levels $0, 1, \dots, \lmax-1$ and enters one of
two states: one indicating that it has finished and has the marker bit not set,
and one indicating that it has finished and has the marker bit set.
\end{proof}
}
{\subsection{Phase Clock}%
\label{sec:phase_clock}

Distributed protocols often benefit from some form of synchronization.
Phase clocks~\cite{DBLP:journals/dc/AngluinAE08a} are one way to synchronize nodes in a population protocol.
The idea is to equip each node with a clock that measures time in (let's say) \emph{hours} consisting of $m \in \N$ \emph{minutes}.
These clocks do not run at a consistent speed and are not fully synchronized (a node's clock might run faster during a period in which the node is activated uncharacteristically often).
However, the clocks can be set up such that, \whplong/, each of the first $\poly(n)$ hours
\begin{enumerate}[noitemsep]
\item

    lasts $\LDAUTheta{n \log n}$ interactions for each node and

\item

    all nodes \emph{simultaneously} spend $\LDAUTheta{n \log n}$ interactions
    in each such hour.

\end{enumerate}

We adapt the phase clock implementation from~\cite{DBLP:conf/soda/GasieniecS18} to our needs.
Each node has a \emph{phase counter} $p \in \N_0$ that keeps track of the current time in minutes.
The value $m \in \N$ represents the number of minutes per hour.
Its exact value must be chosen carefully as specified by \cref{lem:phaseclock} and its proof.
The time for a node with phase counter $p$ can be expressed as $\floor{p/m}$ hours and $p \mod m$ minutes.
To limit the number of states per node, we do arithmetic on the phase counter modulo $h \cdot m$ for a parameter $h \in \N$.
We use \ProtocolPhaseClock{h} to refer to the protocol that uses the parameter $h$\footnote{%
    Technically, $m$ could also appear as a parameter in the index.
    However, for all our applications $m$ will be a constant with respect to $n$.
    Thus, we omit $m$ in the index and always assume it is chosen suitably according to \cref{lem:phaseclock}.
} (which may be a constant or grow with $n$, depending on the protocol using the phase clock).
We also allow $h = \infty$, which means that \ProtocolPhaseClock{h} uses exact phase counters that may become arbitrarily large.

We continue with a formal description of the phase clock implementation in \cref{sec:phase_clock:protocol}.
That \lcnamecref{sec:phase_clock:protocol} also states the key result (\cref{lem:phaseclock}) regarding the synchronization properties of \ProtocolPhaseClock{h}.
The protocol description is based on two simplifying assumptions.
\Cref{ssub:fixing_the_odds_and_ends} explains how to get rid of these.
To simplify the usage of the phase clock protocol in the description of other population protocols, \cref{ssub:phase_clock_interface} describes an interface and its guarantees (extracted from \cref{lem:phaseclock}) to access the phase clock, resulting in this \lcnamecref{sec:phase_clock}'s main result (\cref{thm:phase_clock}).

\subsubsection{Phase Clock Protocol \& Synchronization}%
\label{sec:phase_clock:protocol}

The state of a node is a tuple of the form $(p, b)$.
The \emph{phase counter} $p \in \N_0$ indicates the total number of minutes passed.
Initially, all nodes have $p = 0$.
The \emph{marker bit} $b \in \set{0, 1}$ indicates whether the node is marked ($b = 1$) or not ($b = 0$).
We make two simplifying assumptions for the following description:
\begin{enumerate}[noitemsep]
\item

    We assume $h = \infty$ (so we describe \ProtocolPhaseClock{\infty}).
    In particular, the phase counters are unbounded.

\item

    We assume that the number of marked nodes lies in the interval $\intcc{1, n^{1-\xi}}$ at the start of any interaction.
    Here, $\xi \in \intoc{0, 1}$ is an arbitrary constant.
    Note that the identity as well as the number of marked nodes is allowed to change arbitrarily from interaction to interaction, as long as the number of marked nodes stays within the mentioned interval.

\end{enumerate}
\Cref{ssub:fixing_the_odds_and_ends} explains how to get rid of these
assumptions.

Consider an interaction between an initiator $u$ with state $(p, b)$ and a responder in state $(p', b')$.
Protocol \ProtocolPhaseClock{\infty} causes node $u$ to update its state according to the following transition function (from~\cite{DBLP:conf/soda/GasieniecS18}):
\begin{equation}%
\label{eq:phaseclock}
\bigl[ (p, b), (p', b') \bigr] \mapsto
\begin{cases}
  (\max\set{p, p' + 1}, b)
& \text{if $b = 1$ and}
\\
  (\max\set{p, \phantomas[l]{p' + 1}{p'}}, b)
& \text{otherwise.}
\end{cases}
\end{equation}
The responder's state remains unchanged.

\paragraph{Synchronization Properties}
Remember that the $m$ denotes the number of minutes in an hour.
We define the \emph{hour} $\Round_u(t) \in \N_0$ of node $u$ with phase counter $p(t)$ after $t$ interactions as $\Round_u(t) \coloneqq \floor{p(t) / m}$.
We say a node reached hour $i \in \N_0$ if its phase counter is at least $i \cdot m$.

We now define the notion of \emph{rounds}, which represents a period of interactions during which all nodes have the same hour.
Let $\RoundStart(i)$ (\emph{start of round $i$}) denote the interaction during which the last node reaches hour $i$.
Similarly, let $\RoundEnd(i)$ (\emph{end of round $i$}) denote the interaction during which the first node reaches hour $i+1$.
If $\RoundStart(i) \leq \RoundEnd(i)$ (which is not necessarily true), then $\RoundEnd(i) - \RoundStart(i)$ equals the number of interactions $t$ for which \emph{all} nodes $u$ have, simultaneously, the same hour $\Round_u(t) = i$.
Thus, for any $i \in \N_0$ we define the \emph{length of round $i$} as $\RoundLength(i) \coloneqq \max\set{0, \RoundEnd(i) - \RoundStart(i)}$.
We also define the \emph{stretch of round $i$} as $\RoundStretch(i) \coloneqq \RoundEnd(i) - \RoundEnd(i-1)$.
In other words, the stretch of round $i$ denotes the time it takes for the first node to reach hour $i+1$ after the first node reached hour $i$.
In particular, we always have $\RoundLength(i) \leq \RoundStretch(i)$.

A key property of the above phase clock construction is captured by the following \lcnamecref{lem:phaseclock}.
It states that, by carefully choosing the phase clock parameter $m$, we can ensure that both the round length and stretch of the first $\poly(n)$ many rounds are $\LDAUTheta{n \log n}$.
It is a reformulation of~\cite[Theorem~3.1]{DBLP:conf/soda/GasieniecS18} to fit our setting and proofs.
A brief proof based on a technical lemma from~\cite{DBLP:conf/soda/GasieniecS18} is given in~\cref{app:phase-clock}.
\begin{lemma}[name=, restate=lemphaseclocks]\label{lem:phaseclock}
Let $a, c, d_1 > 0$ be constants and assume $n$ to be sufficiently large with respect to them.
There is a constant parameter $m \in \N$ (from the definition of \ProtocolPhaseClock{\infty}) and a constant $d_2 > 0$ such that, with probability at least $1 - n^{-a}$, for all $i \in \set{0, 1, \dots, n^{c}}$
\begin{enumerate}
\item
    \begin{math}
         \RoundLength(i)
    \geq d_1 \cdot n \log n
    .
    \end{math}

\item
    \begin{math}
         \RoundStretch(i)
    \leq d_2 \cdot n \log n
    .
    \end{math}
\end{enumerate}
\end{lemma}
Note that in the above \lcnamecref{lem:phaseclock}, the constant parameter $m$ depends on the involved constants $a$, $c,$ and $d_1$.
In particular, it increases with the exponent $a$ of the desired error probability.

\subsubsection{Fixing the Odds and Ends}%
\label{ssub:fixing_the_odds_and_ends}

We briefly explain how the simplifying assumptions we made for the protocol description can be avoided.

\paragraph{Computing a Junta On the Fly}
Our protocol description in \cref{sec:phase_clock:protocol} assumes that the number of marked nodes lies in the interval $\intcc{1, n^{1 - \xi}}$ at the start of any interaction, where $\xi \in \intoc{0, 1}$ is an arbitrary constant.
Instead of assuming a priori the existence of such a junta in each round, we can use protocol \ProtocolFormJuntaExt/ from \cref{sec:junta_creation} to generate such a junta (with $\xi = \chosenxi/$) \whplong/ within $\LDAUOmicron{n \log n}$ interactions using $2 \cdot (\floor{\log\log n} - 2)$ states per node (see \cref{thm:junta}).

The state of a node is a tuple $(l, a, b, f, p)$.
The (sub-) tuple $(l, a, b, f)$ is used as the state for the junta protocol and consists of the level $l \in \set{0, 1, \dots, \floor{\log\log n} - 3}$, the activity bit $a \in \set{0, 1}$, the marker bit $b \in \set{0, 1}$, and the finished bit $f \in \set{0, 1}$.
The (sub-) tuple $(p, b)$ is used for the phase clock protocol and consists of the phase counter $p \in \N_0$ and the marker bit $b \in \set{0, 1}$.
Note that the marker bit $b$ is used by both protocols.
All nodes start in state $(0, 1, 0, 0, 0)$ (with only the activity bit set) and execute protocol \ProtocolFormJuntaExt/ on the (sub-) tuple $(l, a, b, f)$.
Once the finished bit $f$ of a node is set it starts to execute the phase clock protocol from \cref{sec:phase_clock:protocol} on the (sub-) tuple $(p, b)$.
We make two simple observations:
\begin{enumerate}[noitemsep]
\item

    \ProtocolPhaseClock{\infty} starts only when (if) the first node in \ProtocolFormJuntaExt/ becomes marked (and, thus, finished).
    By \cref{thm:junta}, this happens \whplong/ within $\LDAUOmicron{n \log n}$ interactions.
    Additionally, since the finished bit spreads via a one-way epidemic, \whplong/ \emph{all} nodes start to execute \ProtocolPhaseClock{\infty} in $\LDAUOmicron{n \log n}$ interactions (by \cref{lem:epidemic}).

\item

    When \ProtocolPhaseClock{h} starts, it does so with a junta of size exactly $1$.
    During its execution, the junta might grow (due to more nodes becoming marked in \ProtocolFormJuntaExt/).
    However, by \cref{thm:junta}, \whplong/ the junta does not grow beyond size $n^{\chosenoneminusxi/}$.

\end{enumerate}

It follows that \cref{lem:phaseclock} holds also if the junta is computed on the fly, with the only difference being that it takes $\LDAUOmicron{n \log n}$ interactions before the first node starts to increase its phase counter.
This yields the following observation.
\begin{observation}%
\label{obs:junta_on_the_fly}
We can change \ProtocolPhaseClock{\infty} such that, \whplong/, it computes a non-empty junta (marked nodes) of size at most $n^{\chosenoneminusxi/}$ on the fly and \cref{lem:phaseclock} still holds.
\end{observation}

\paragraph{Unbounded State Space}
The population protocol as described in \cref{sec:phase_clock:protocol} requires an unbounded number of states, since a node's phase counter $p$ is unbounded.
We can avoid this by performing any arithmetic on the phase counter modulo $h \cdot m$.
Here, $h \in \N$ is a parameter that controls how many hours nodes can count exactly (a node reaching hour $h$ cannot tell whether it has hour $h$ or hour $0$).

Note that \cref{lem:phaseclock} implies that during the first $\poly(n)$ many rounds all nodes are, \whplong/, in neighboring hours (otherwise, if there was a time where one node is in hour $i$ and another node is in hour $i+2$, those nodes could never be simultaneously in hour $i+1$).
Thus, $h = 3$ is already enough to allow a node, \whplong/, to distinguish whether its interaction partner is in an earlier, in the same, or in a later hour.
We get the following observation.
\begin{observation}%
\label{obs:bounding_the_states}
Assume $h \geq 3$.
Define \ProtocolPhaseClock{h} analogously to \ProtocolPhaseClock{\infty} (see \cref{eq:phaseclock} but with all arithmetic on the phase counter $p$ done modulo $h \cdot m$.
\Whplong/, all nodes can correctly determine the maximum in the transition function of \ProtocolPhaseClock{h} (\cref{eq:phaseclock}) during the first $n^c$ rounds, where $c$ is the constant from \cref{lem:phaseclock}.
In particular, \cref{lem:phaseclock} holds also for \ProtocolPhaseClock{h}.
\end{observation}

\subsubsection{Phase Clock Interface}%
\label{ssub:phase_clock_interface}

To simplify the usage of the phase clock in our Majority protocols, we provide an interface to \ProtocolPhaseClock{h}, together with the guarantees implied by \cref{lem:phaseclock}.
The parameter $h \in \N \cup \set{\infty}$ is assumed to be at least $3$.
We group the guarantees of the different interface functions in three categories:
\begin{enumerate}[(A), wide=0pt, itemsep=0.618em]
\item\label[guarantee]{it:phase_clock_interface:A}
    The following function calls to \ProtocolPhaseClock{h} are guaranteed to work as described with probability $1$:
    \begin{itemize}[noitemsep, wide=0em]
    \item \PCcall{h}{u, v}:
        Update the state of $u$ according to \cref{eq:phaseclock}.

    \item \PCmarked{u}:
        Return \bTrue/ iff $u$'s marker bit $b$ is set (meaning $u$ is a junta node).

    \item \PCfinishedJunta{u}:
        Return \bTrue/ iff $u$'s finished bit $f$ is set.

    \item \PCoverflowed{u}:
        Return \bTrue/ iff, in the past, the phase counter of $u$ decreased at least once in absolute value (due to the modulo $h \cdot m$ arithmetic).

    \item \PCnewHour{u}:
        Return \bTrue/ iff $u$ reached a new hour the last time it updated the phase counter.

    \item \PCskippedHour{u}:
        Return \bTrue/ iff there was an interaction during which the hour of node $u$ increased by at least $2$ (this happens if the clocks are not properly synchronized).
    \end{itemize}

\item\label[guarantee]{it:phase_clock_interface:B}
    The following function calls to \ProtocolPhaseClock{h} are guaranteed to work as described for $n^c$ many rounds with probability $1 - n^{-a}$ for any constants $a, c > 0$ (assuming $m$ was chosen suitably and $n$ is sufficiently large):
    \begin{itemize}[noitemsep, wide]
    \item \PCdifferentHour{u, v}:
        Return \bTrue/ iff $u$ is currently in a different hour as $v$.

    \item \PCsameHour{u, v}:
        Return \bTrue/ iff $u$ is currently in the same hour as $v$.

    \item \PCsmallerHour{u, v}:
        Return \bTrue/ iff $u$ is currently in a smaller hour than $v$.

    \item \PClargerHour{u, v}:
        Return \bTrue/ iff $u$ is currently in a larger hour than $v$.
    \end{itemize}

\item\label[guarantee]{it:phase_clock_interface:C}
    Moreover, until the first node reaches hour $h$ (i.e., for the first $\RoundEnd(h-1)$ many interactions), all function calls work as described with probability $1$.
\end{enumerate}

Protocols using the phase clock should be aware that, with negligible probability, the phase clock might not run at all (no nodes were marked) or might run too fast (if $n^{1 - \LDAUomicron{1}}$ nodes were marked).

We gather the above guarantees in \cref{thm:phase_clock}, the main result of this \lcnamecref{sec:phase_clock}.
In the following, remember that $\lmax = \floor{\log\log n} - 3$ is the maximum junta level.
\begin{theorem}%
\label{thm:phase_clock}
Let $a, c > 0$ be constants and assume $n$ to be sufficiently large with respect to them.
Consider a parameter $h \in \set{3, 4, \dots} \cup \set{\infty}$.
\ProtocolPhaseClock{h} supports the interface specified above with \cref{it:phase_clock_interface:A, it:phase_clock_interface:B, it:phase_clock_interface:C} and uses $\LDAUTheta{h + \log\log n}$ states per node.
A node whose phase clock is running (finished bit from junta creation is set) is in one of $\LDAUTheta{h}$ many states.
\end{theorem}
\begin{proof}[Proof of \cref{thm:phase_clock}]
\PCcall{h}{\cdot}, \PCmarked{\cdot}, as well as \PCfinishedJunta{\cdot} are simple state updates and lookups.
As such, they are correct by definition.
The function calls \PCoverflowed{\cdot}, \PCnewHour{\cdot}, and \PCskippedHour{\cdot} can be implemented by providing a bit for each of them, which is set to either \bTrue/ or \bFalse/ according to the respective function description (note that the corresponding conditions can be easily checked locally by a node).
This implies \cref{it:phase_clock_interface:A}.

The statements from \cref{it:phase_clock_interface:B} (which cover the function calls \PCdifferentHour{\cdot}, \PCsameHour{\cdot}, \PCsmallerHour{\cdot}, and \PClargerHour{\cdot}) are a consequence of the choice $h \geq 3$ and \cref{lem:phaseclock, obs:bounding_the_states}.
These ensure that, \whplong/, for $\poly(n)$ rounds, the hours of any pair of nodes differ by at most one.

Before the first node reaches hour $h$, nodes store their exact phase counter and, thus, know their exact hour.
This implies \cref{it:phase_clock_interface:C}.

We now bound the number of states each node requires.
By \cref{thm:junta}, the on the fly creation of the junta requires $2 \cdot (\lmax + 1)$ states.
Note that, as described in \cref{ssub:fixing_the_odds_and_ends}, the values of a node's phase clock state (marker bit, phase counter, bit for \PCoverflowed{\cdot}, bit for \PCnewHour{\cdot}, bit for \PCskippedHour{\cdot}) only become relevant once its finished bit from the junta creation is set.
At that moment, nodes can forget the level from the junta calculation and only need to store whether they are finished and marked or finished and unmarked.
Thus, for each of the two value of the marker bit when the node is finished, the maximum number of states that can occur is bounded by $h \cdot m \times 2^3$.
So in total, the number of states per node is
\begin{equation}
       \undervert{2 \cdot \lmax}{\shortstack{junta\\calculation}}
+      \undervert{2}{marked?}
\times \undervert{h \cdot m}{phase counter}
\times \undervert{2^3.}{\renewcommand{\arraystretch}{0.5}\begin{tabular}{l}\PCoverflowed{\cdot}\\\PCnewHour{\cdot}\\\PCskippedHour{\cdot}\end{tabular}}
\end{equation}
Since we have $\lmax = \LDAUTheta{\log\log n}$ and $m = \LDAUTheta{1}$, this is
$\LDAUTheta{h + \log\log n}$.
\end{proof}
}
{\section{Simple Majority}%
\label{sec:majority_clocks}

In this section we present and analyze our first majority protocol, \ProtocolSimpleMaj{s, h}, which works correctly \whplong/.
It is parameterized by two integer values $s$ and $h$ (the latter value is used for the phase clocks).
As many majority protocols, it is based on a variant of the \emph{cancellation} and \emph{doubling} approach (see, e.g.,~\cite{DBLP:journals/dc/AngluinAE08a}).
Here, the  general idea is that nodes first perform cancellation (opposite opinions cancel each other out) for $\LDAUTheta{n \log n}$ consecutive interactions and then, for another $\LDAUTheta{n \log n}$ consecutive interactions, each node with an opinion finds a node whose opinion was canceled and copies its opinion onto that node.
Cancellation boost the ratio between the support of majority and minority opinions, while duplication eliminates non-opinionated nodes created during cancellation.

Our protocol uses \emph{cancellation} as described above.
However, nodes do  not simply create a single copy of their opinion but $s \geq 2$ copies (\emph{load explosion}).
These copies are distributed among the nodes via a load balancing mechanism.
This approach allows us to reduce the number of required phases.
We will prove the following theorem:
\begin{theorem}[name=, restate=thmmajorityclocks]\label{thm:majority_clocks}
Let $s \in \N \setminus \set{1}$ and $h \in \N \setminus \set{1, 2}$.
Consider the majority problem for $n$ nodes with initial absolute bias $\BiasAbsolute \in \N$.
\Whplong/, protocol \ProtocolSimpleMaj{s, h} correctly identifies the majority for all interactions $t = \LDAUOmega{n \log n \cdot \log_s(n / \BiasAbsolute)}$.
It uses $\LDAUTheta{h s + \log\log n}$ states per node.
\end{theorem}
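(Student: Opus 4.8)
The plan is to assemble $\ProtocolClockedM{s, r}$ from the primitives of \cref{sec:auxiliary}. Run $\ProtocolFormJuntaExt/$ to obtain, with high probability, a junta of $n^{\epsilon}$ marked nodes (constant $\epsilon\in\intco{0,1}$) within $\BigO{n\log n}$ interactions, feed it into $\ProtocolPhaseClock{r}$ to get a round structure, and let every node carry a bounded \emph{load} $\ell\in\set{-cs,\dots,cs}$ for a suitable constant $c$, with $\ell$ initialised to $\pm1$ according to the opinion and $\operatorname{sign}(\ell)$ as output. The intended dynamics are: in each interaction the initiator $u$ updates its phase counter; if $\PCnewRound{u}$ now returns $\bTrue/$ it \emph{amplifies}, replacing $\ell$ by $s\ell$ truncated sign-preservingly into $\set{-cs,\dots,cs}$; otherwise it performs one load balancing step with the responder $v$, but only if $\PCsameRound{u,v}$ holds, so that loads belonging to different rounds are never mixed (this is where one uses $r\ge3$, i.e.\ a phase clock whose interface can compare rounds). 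A node keeps its junta level until it first observes a new round and then recycles those $\BigTheta{\log\log n}$ states as in \cref{sec:phase_clock}. The state count follows at once: during the junta sub-protocol a node has $\BigTheta{\log\log n}$ states, and afterwards its state is a tuple consisting of the load ($\BigTheta{s}$ values), the phase counter ($\BigTheta{r}$ values) and constantly many flag bits, i.e.\ $\BigTheta{sr}$ values; the two regimes are disjoint, so there are $\BigTheta{sr+\log\log n}$ states in total.

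For the running time and correctness I would first fix the randomness of this scaffolding. By \cref{lem:junta, lem:junta:nodesatlowerbound}, with probability $1-n^{-a}$ the junta has the desired (inverse-polynomially small) size after $\BigO{n\log n}$ interactions; conditioned on that, \cref{lem:phaseclock} provides a constant phase-clock parameter $m$ (chosen below) such that for the first $\poly(n)$ rounds the population is synchronized — any two nodes are always in the same or in adjacent rounds, and during at least $d_1 n\log n$ consecutive interactions of each round all nodes agree on the current round number. I then pick $d_1$, hence $m$, large enough that the time to reduce a discrepancy of $\BigO{s}$ down to $2$ — which by \cref{lem:loadbalancing} is $\BigO{n\log(ns)}=\BigO{n\log n}$ (using $s\le n$) — is at most $d_1 n\log n$, so that within every round the balancing steps reduce the discrepancy of the current loads to at most $2$ well before the round ends. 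A union bound over the $\BigO{\log n}$ rounds that matter keeps the overall failure probability polynomially small.

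The core of the argument is an invariant on the total load $S$. A load balancing step preserves $S$, and an amplification multiplies $S$ by $s$ \emph{unless} the truncation is active for some node; I would show the truncation cannot trigger before $|S|$ has reached $\BigOmega{n}$, because right after any round's balancing every load is within $2$ of $S/n$, so while $|S|<\beta n$ (for the relevant constant $\beta\ge3$) the loads are $\BigO{1}$ and survive multiplication by $s$ once $c$ is chosen large enough. Hence, in a synchronized run, $|S|=\BiasAbsolute\cdot s^{i}$ after $i$ rounds and $\operatorname{sign}(S)$ keeps equalling the initial majority; after $i^{*}=\BigO{\log_s(n/\BiasAbsolute)}$ rounds we have $|S|\ge\beta n$, and then a discrepancy of at most $2$ forces \emph{every} load to have the majority sign and magnitude at least $1$. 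From such a configuration on, every interaction — a balancing step (which keeps both resulting loads $\ge1$ whenever both inputs are), an amplification (which only magnifies a load of the majority sign), or any later misbehaviour of the phase clock — preserves the property that all loads carry the majority sign, so the output is correct for every $t\ge\RoundStart(i^{*})+\BigO{n\log n}$. Since the junta costs $\BigO{n\log n}$ interactions and each of the $i^{*}$ synchronized rounds lasts $\BigO{n\log n}$ interactions, this threshold is $\BigO{n\log n\cdot\log_s(n/\BiasAbsolute)}$, as claimed.

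The main obstacle is the start-up transient: during the first $\BigO{n\log n}$ interactions, while the junta is still forming, \cref{lem:phaseclock} does not yet apply, the rounds seen by different nodes can be badly out of step, and a naive analysis loses the invariant $|S|=\BiasAbsolute s^{i}$ — worse, had the minority-opinion nodes among the early (marked) ones amplified first, one would even have to rule out that $\operatorname{sign}(S)$ flips, which already matters when $\BiasAbsolute=1$. I would handle this by tying the amplification dynamics to completion of the junta sub-protocol at each node, so that throughout the transient every load is still $\pm1$: then $S$ equals the signed bias \emph{exactly} (the only load operation taking place is $\pm1$ balancing, i.e.\ pairwise cancellation, which preserves $S$), the truncation provably never fires, and the first amplification a node ever performs already lies in the synchronized regime of \cref{lem:phaseclock}. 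The $\BigO{1}$ extra rounds this start-up may cost are absorbed by the additive $\BigO{n\log n}$ slack in the statement. Getting this step clean, and assembling it with the union bound, is where essentially all the effort goes; the phase-clock scaffolding, the per-round load balancing, and the permanence of a correct configuration are then routine applications of the lemmas of \cref{sec:auxiliary}.
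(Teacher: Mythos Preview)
Your proposal is correct and follows essentially the same route as the paper: pick the phase-clock constant so that each round outlasts the load-balancing time from \cref{lem:loadbalancing}, maintain the invariant $\Phi_i=\BiasAbsolute\cdot s^{i}$ across rounds, and conclude that after $i^{*}=\BigO{\log_s(n/\BiasAbsolute)}$ rounds all loads carry the majority sign permanently; the state count and runtime then drop out exactly as you say.

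The one place you diverge is the ``start-up transient'', which you flag as the main obstacle and propose to solve by delaying amplification until each node has left the junta protocol. The paper does not treat this at all, because the protocol already handles it implicitly: before any node is marked no phase counter can advance, so $\PCnewRound{\cdot}$ never fires and every load stays at $\pm1$ (only balancing in round~$0$ occurs, which preserves the signed sum); once marked nodes appear, \cref{lem:phaseclock} applies directly --- the paper simply records in \cref{sec:phase_clock} that the lemma ``still holds'' with junta creation folded in, and since fewer marked nodes can only slow the clock, round~$0$ comfortably outlasts junta formation. In other words, your proposed fix is precisely what tying amplification to $\PCnewRound{\cdot}$ already does, so the elaborate transient analysis is unnecessary. (A small side remark: $\PCsameRound{\cdot}$ needs only $r\ge2$, not $r\ge3$; the stronger assumption in the theorem is not used in the paper's proof either.)
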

According to \Cref{thm:majority_clocks} there is no  benefit by choosing $h > 3$.
However, our stable protocol presented in \cref{sec:majority_clocks_exact} does rely on a larger value of $h$.

We now describe the protocol's state space and its transition function (see also \cref{alg:clocks}).
Afterward, we give the proof of \cref{thm:majority_clocks}.

{\input{algorithm-simple-majority}}

\paragraph{State Space}
The state of a node $u$ consists of the states required for the \ProtocolPhaseClock{h} protocol (which subsumes the states of \ProtocolFormJuntaExt/, cf.~\cref{sec:phase_clock}) and a \emph{load value} $\Load{u}$.
The load value $\Load{u}$ represents $u$'s current opinion (sign) and its \enquote{magnitude} (absolute value).
It is initialized with either $+1$ or $-1$, depending on $u$'s initial opinion.
The output function maps the state of a node to the sign of its load value.
Thus, the majority guess of a node $u$ is equal to $\sign(\Load{u})$.\footnote{%
    The value $\sign(\Load{u}) = 0$ (i.e., $\Load{u} = 0$) can be interpreted as an \enquote{undecided} opinion.
    In the proof of \cref{thm:majority_clocks} we show that, \whplong/, all nodes eventually agree on a non-zero sign value.
}

For most of the analysis, we assume unbounded, integral load values.
In the proof of \cref{thm:majority_clocks}, we will see that, \whplong/, load values will be integers not exceeding $3s$ unless all nodes have already the same positive or negative sign.
This allows us to \emph{cap the absolute load values} at $3s$ (i.e., setting a node $u$'s load via the assignment $\Load{u} \gets x$ to a value $x$ with $\abs{x} \geq 3s$ instead sets $\Load{u} \gets \sign(x) \cdot 3s$) while still maintaining the high probability guarantee from \cref{thm:majority_clocks}.

\paragraph{Transition Function}
Consider an interaction between two nodes $u$ (initiator) and $v$ (responder).
The nodes' actions can be divided into three parts: \emph{synchronization}, \emph{load explosion}, and \emph{load balancing}.
During the synchronization, the \ProtocolPhaseClock{h} protocol is triggered with initiator $u$ and responder $v$ to update the states of $u$'s phase clock.
During the load explosion, $u$ uses the $\PCnewHour{\cdot}$ method to check whether this is its first interaction in its current hour.
If yes, it multiplies its load by a factor of $s$.
During the load balancing, the nodes use the phase clock's $\PCsameHour{\cdot}$ method to check whether they are in the same hour and, if so, perform a simple load balancing step by balancing their respective loads as evenly as possible.

The following \lcnamecref{obs:stable:benefit_h} follows from the fact that all phase clock function calls work correctly with probability $1$ until the first node reaches hour $h$ (\cref{it:phase_clock_interface:C} in \cref{ssub:phase_clock_interface}).
In particular, since nodes $u$ and $v$ balance their loads only if $\PCsameHour{u, v}$ returns \bTrue/ (\cref{ln:simplalg:samehour} in \cref{alg:clocks}), we get:
\begin{observation}%
\label{obs:stable:benefit_h}
Whenever two nodes $u$ and $v$ balance their loads in \ProtocolSimpleMaj{s,h} before some node reached hour $h$, both $u$ and $v$ are guaranteed to be in the same hour.
\end{observation}
This observation will be of importance for our stable majority protocol presented in \cref{sec:majority_clocks_exact} (which is based on \ProtocolSimpleMaj{s, h}).

\paragraph{Total \& Scaled Total Load}
Let $\Load{u}(t)$ denote the load of node $u$ after $t$ interactions and $\LExpl{u}(t)$ the number of load explosions node $u$ experienced after $t$ interactions.
Define the \emph{total load} $\TLoad{t}$ and the \emph{scaled total load} $\STLoad{t}$ after $t$ interactions as
\begin{dmath*}%
          \TLoad{t}
\coloneqq \sum_{u \in \intcc{n}} \Load{u}(t)
\quad\text{and}\quad
          \STLoad{t}
\coloneqq \sum_{u \in \intcc{n}} \frac{\Load{u}(t)}{s^{\LExpl{u}(t)}}
\end{dmath*}.
Observe that $\STLoad{0} = \TLoad{0}$ is the total initial load.
Thus, $\sign(\STLoad{0}) = \sign(\TLoad{0})$ reflects the initial majority and $\abs{\STLoad{0}} = \abs{\TLoad{0}}$ equals the initial absolute bias $\BiasAbsolute$.

The following \lcnamecref{lem:correct_opinion} will be useful to show that, if \ProtocolSimpleMaj{s, h} works for $\LDAUOmicron{\log n}$ rounds as expected (the phase clock runs, is properly synchronized, and the loads balance out), all nodes forever agree on the correct initial majority.
\begin{lemma}%
\label{lem:correct_opinion}
Let $t \in \N_0$ and assume that whenever two nodes $u$ and $v$ balance their loads in an interaction $t' \leq t$, $\LExpl{u}(t') = \LExpl{v}(t')$.
Then $\STLoad{t} = \STLoad{0}$.
If, additionally, for all nodes $u$ and $v$ we have $\sign(\Load{u}(t)) = \sign(\Load{v}(t))$, then all nodes forever agree on the correct initial majority opinion after interaction $t$.
\end{lemma}
\begin{proof}
The invariant for $\STLoad{t}$ follows via a simple induction over $t$.
For the second part, assume all nodes' load values have the same sign after $t$ interactions.
Note that no load balancing action can change this, afterward.
Moreover, the total scaled load $\STLoad{t}$ also has the same sign as each single node.
So every node's sign forever equals $\sign(\STLoad{t})$.
Since the \lcnamecref{lem:correct_opinion}'s first part states $\sign(\STLoad{t}) = \sign(\STLoad{0})$ (the initial majority opinion), this implies that each node's sign forever equals the correct initial majority opinion after interaction $t$.
\end{proof}

We are now ready to prove this \lcnamecref{sec:majority_clocks}'s main result.
\begin{proof}[Proof of \cref{thm:majority_clocks}]
For $i \in \N_0$ let $T_i$ denote the last interaction of round $i$ (with $T_i = \infty$ if $\RoundLength(i) = 0$).
Define $i^* \coloneqq \ceil{\log_s(2n / \BiasAbsolute)}$.
Applying \cref{lem:loadbalancing, lem:phaseclock}, with $d_1$ from \cref{lem:phaseclock} equal to the constant $c$ from \cref{lem:loadbalancing}, and using a union bound over the first $i^* + 1 = \LDAUOmicron{\log n}$ rounds yields that, \whplong/, the following properties hold:
\begin{enumerate}[(1)]
\item\label[prop]{prf:majority_clock:prop1} For all $i \in \set{0, 1, \dots, i^*}$, we have $\RoundLength(i) = \LDAUOmega{n \log n}$ and $\RoundStretch(i) = \LDAUOmicron{n \log n}$ (\cref{lem:phaseclock}).
\item\label[prop]{prf:majority_clock:prop2} For all $i \in \set{0, 1, \dots, i^*}$, the loads have discrepancy at most $2$ after interaction $T_i$ (\cref{lem:loadbalancing}).
\end{enumerate}
Note that \cref{prf:majority_clock:prop1} implies that no node skips any hour $i \in \set{0, 1, \dots, i^*}$:
If there were such a node, it had hour $< i$ and met a node in hour $> i$ when it skipped hour $i$.
But then, by definition of a round's length, we have $\RoundLength(i) = 0$.
This contradicts \cref{prf:majority_clock:prop1}.

We now condition on the high probability event that the above properties hold and consider the first $T_{i^*}$ interactions.
Because nodes are properly synchronized, the calls to $\PCsameHour{\cdot}$ (\cref{ln:simplalg:samehour}) correctly indicate whether two nodes are in the same hour or not.
Also, since no node skipped an hour, any node in hour $i$ experienced exactly $i$ load explosions.
This implies that, whenever two nodes balance their loads during the first $T_{i^*}$ interactions, they experienced the same number of load explosions.
\Cref{lem:correct_opinion} gives $\STLoad{T_{i^*}} = \STLoad{0}$, and the scaled total load definition gives
\begin{math}
     \abs{\STLoad{T_{i^*}}}
=    \abs{\TLoad{T_{i^*}}} / s^{i^*}
\leq \BiasAbsolute \cdot \abs{\TLoad{T_{i^*}}} / (2n)
\end{math}.
By using
\begin{math}
  \abs{\STLoad{T_{i^*}}}
= \abs{\STLoad{0}}
= \alpha
\end{math}
this yields
\begin{math}
     \abs{\TLoad{T_{i^*}}}
\geq 2n
\end{math}.

Note that if $\abs{\TLoad{T_{i^*}}}\ge 2n$, the absolute value of the average load is at least $2$.
Hence, either all nodes have load exactly $2$ (or $-2$), or there is at least one node with load $\ge 3$ ($\le -3$).
In the later case it follows from  \cref{prf:majority_clock:prop2} that all other nodes have load at least $1$ (at most $-1$).
In both cases, all loads have the same sign after interaction $T_{i^*}$.
Thus, using again \cref{lem:correct_opinion}, all nodes forever agree on the correct initial majority opinion after interaction $T_{i^*}$.
The runtime bound follows since, by \cref{prf:majority_clock:prop1}, the first $i^* + 1 = \LDAUOmicron{\log_s(n / \alpha)}$ rounds have stretch $\LDAUOmicron{n \log n}$.

To bound the number of states, observe that – conditioned on the high probability event that the above properties hold – no absolute load value exceeds $2s$ unless all nodes' loads have the same sign.
Indeed, if not all loads have the same sign at the end of a round, the discrepancy bound (\cref{prf:majority_clock:prop2}) implies that all loads have absolute load at most $2$.
After the load explosion in the following round the load will be at most $2s$.
This allows us to cap the absolute load values at $3s$ as described at the beginning of this \lcnamecref{sec:majority_clocks} and, \whplong/, the protocol outcome will not change.\footnote{
    For \ProtocolSimpleMaj{s, h}, we could also cap at $2s$.
    The cap at $3s$ is used in our stable majority protocol in \cref{sec:majority_clocks}.
    Note that, if the load balancing works as expected (discrepancy $\leq 2$), any node with load $3s$ can be sure that \emph{all} loads have the same, non-zero sign.
}
These load values are combined with the states from \ProtocolPhaseClock{h}.
By \cref{thm:phase_clock}, that protocol requires in total $\LDAUTheta{h + \log\log n}$ states per node, but only $\LDAUTheta{h}$ states per node once the node has finished the junta election process.
From that time on, each node needs to store the current phase of the \ProtocolPhaseClock{h} protocol and the current load value.
Thus, \ProtocolSimpleMaj{s, h} requires $\LDAUTheta{h s + \log\log n}$ states per node.
\end{proof}
}
{\section{Stable Majority}%
\label{sec:majority_clocks_exact}

In this \lcnamecref{sec:majority_clocks_exact}, we present and analyze the protocol \ProtocolStableMaj{s}, a hybrid majority protocol which stabilizes efficiently.
We prove the following \lcnamecref{thm:majority_clocks_exact}:
\begin{theorem}%
\label{thm:majority_clocks_exact}
Let $s \in \set{2, 3, \dots, n}$.
Consider the majority problem for $n$ nodes with initial absolute bias $\BiasAbsolute \in \N$.
Protocol \ProtocolStableMaj{s} is exact and stabilizes \whplong/ and in expectation in $\ldauOmicron{n \log n \cdot \log_s(n / \BiasAbsolute)}$ interactions.
It uses $\LDAUTheta{s \cdot \log_s n}$ states per node.
\end{theorem}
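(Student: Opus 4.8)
The plan is to build \ProtocolClockedEMS{s} as an "exact" upgrade of \ProtocolClockedM{s, r}: we run the phase-clock-driven load-explosion/load-balancing protocol of \cref{sec:majority_clocks}, which correctly identifies the majority \whp, but add a small deterministic back-up computation that guarantees convergence to the correct answer with probability $1$. The key conceptual point is that the fast protocol reaches a \emph{correct} answer \whp in $\LDAUOmicron{n \log n \cdot \log_s(n/\BiasAbsolute)}$ interactions, but since the phase clock may (with negligible probability) misbehave — not run at all, or run too fast — the fast protocol alone is not exact. The back-up is a slow, four-state-style majority/cancellation protocol that always stabilizes (in $\LDAUOmicron{n^2 \log n}$ interactions), composed via the \cref{sec:introduction}-style hybrid mechanism: a node reports the fast protocol's guess while it believes the phase clock is still running, and otherwise reports the back-up's guess. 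Exactness then follows because the back-up is exact and, on the negligible-probability bad event, the phase clock eventually stops progressing and every node falls back to the (correct) back-up output.

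I would proceed in the following steps. (1) Define the state space: a node stores its \ProtocolPhaseClock{r} state (with $r$ a constant, say $r=3$), a bounded load value $\Load{u} \in \set{-c s, \dots, c s}$ for a suitable constant $c$ (justified by \cref{thm:majority_clocks}'s $\LDAUOmicron{s}$ bound on loads), and a back-up opinion field from a constant-size exact majority protocol; additionally a small counter tracking how many rounds have been observed with no progress, used to trigger the fallback. This gives $\LDAUTheta{s}$ states for the load part and $\LDAUTheta{\log_s n}$ for representing round/phase information accumulated over the $\log_s(n/\BiasAbsolute)$ relevant rounds, for a total of $\LDAUTheta{s \cdot \log_s n}$ — this product is where the state bound comes from, so I would be careful to account for exactly which information must be retained across rounds. (2) Transition function: as in \cref{alg:clocks} (load explosion on \PCnewRound{u}, load balancing on \PCsameRound{u,v}, then \PCcall{r}{u,v}), plus a back-up transition step and an output-selection rule based on whether the node has recently seen \PCnewRound{\cdot} fire. (3) Correctness of the fast part: invoke \cref{thm:majority_clocks} (applied with constant $r$) to conclude that \whp within $\LDAUOmicron{n \log n \cdot \log_s(n/\BiasAbsolute)}$ interactions all load signs equal $\sign$ of the true majority and stay that way; combined with the fallback rule, this gives stabilization \whp in the claimed time. (4) Exactness: show that with probability $1$ the system eventually stabilizes. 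On the good event it's already done. On the bad event, I argue the phase clock's marked set is either empty or non-empty; if non-empty the phase clock still makes progress forever, which I would need to rule out as a non-stabilizing trap — here the back-up must be run \emph{unconditionally} in parallel so that, regardless of the phase clock, the back-up field converges to the correct majority, and the fallback counter guarantees that if progress is "too slow or too fast" every node permanently switches to reading the back-up. So exactness reduces to: (a) the back-up protocol is exact and correct from the same initial opinions (standard four-state cancellation/doubling argument, $\LDAUOmicron{n^2 \log n}$ stabilization), and (b) the output-selection rule is eventually constant and equal to the back-up's output on every configuration reachable after the bad event.

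The main obstacle I anticipate is step (4)(b): designing the fallback trigger so that it is \emph{monotone and reliable} — it must never fire on the good event (or only after the fast protocol is already correct, so switching is harmless), yet must provably fire on \emph{every} bad trajectory, including the delicate case where the phase clock runs but erratically (e.g. a few marked nodes but the $\LDAUOmega{n\log n}$ round-length bound of \cref{lem:phaseclock} fails). A clean way to handle this is to have the back-up protocol \emph{also} drive a second, independent coarse clock (or simply count interactions via a slow "am I still seeing opinion changes" heuristic on the back-up field) so that the node's output is: back-up's guess once the back-up is self-certified stable, and the fast guess otherwise. Then exactness follows purely from exactness of the back-up, and the $\whp$ time bound follows from \cref{thm:majority_clocks} plus \cref{lem:phaseclock} ensuring the fast guess is already correct long before the back-up would certify. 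I would also need to double-check the state count: the back-up part is constant, the phase clock part is $\LDAUOmicron{r}=\LDAUOmicron{1}$, the load part is $\LDAUTheta{s}$, and the $\log_s n$ factor must come from whatever bounded round/phase bookkeeping is genuinely required — reconciling "$\LDAUTheta{s\cdot\log_s n}$" with "$\LDAUTheta{s + \log\log n}$" claimed for the \emph{convergent} variant in the introduction is the subtle accounting I would want to get exactly right.
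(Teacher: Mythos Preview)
Your high-level shape — fast phase-clock protocol plus a slow exact backup, with a switching rule — is right, and matches the paper. But two interlocking pieces are wrong or missing, and they are exactly the pieces you flag as uncertain.

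\textbf{Where the $\log_s n$ factor comes from.} You set $r$ to a constant and then hunt for some other ``round/phase bookkeeping'' that costs $\log_s n$ states. In the paper the factor is simply $r$: \ProtocolClockedEMS{s} runs \ProtocolClockedM{s, r} with $r = \ceil{\log_s(5n)}$, so the phase counter itself has $\LDAUTheta{\log_s n}$ values, and together with the $\LDAUTheta{s}$ load values this gives $\LDAUTheta{s \cdot \log_s n}$ states. This is not an accident; the large $r$ is what makes the exactness argument work, as below. (And this immediately explains the contrast with the convergent variant: there one can afford constant $r$, hence $\LDAUTheta{s + \log\log n}$ states.)

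\textbf{The fallback/exactness mechanism.} Your ``monotone and reliable trigger'' worry is real for your design, and your proposed fixes (a no-progress counter, a self-certifying backup, an ``am I still seeing changes'' heuristic) are not obviously implementable with probability-$1$ guarantees in this model. The paper avoids the problem entirely by using the large $r$ to get a \emph{deterministic} correctness certificate. A node becomes \emph{finished} when it starts a new round with $|\Load{u}| \geq 3$, or when its phase counter overflows (guaranteed to happen if any junta node exists), or by contagion from a finished partner. Once finished, it stops \ProtocolClockedM{s,r} and instead checks, with probability $1$ (since no one has wrapped around yet), whether its partner is in a different round or has a different sign; if so it sets an error bit, which also spreads by contagion. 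The output rule is: use the backup if the phase counter is zero or the error bit is set, else use $\sign(\Load{u})$. Exactness is then a clean case split: no junta $\Rightarrow$ phase counter stays zero $\Rightarrow$ backup; some error bit set $\Rightarrow$ it spreads $\Rightarrow$ backup; no error bit ever set $\Rightarrow$ all nodes finished in the same round with the same sign, and since balancing was only ever between same-round nodes the invariant $\Phi_i = \BiasAbsolute \cdot s^i$ held throughout, so that common sign is the true majority. No probabilistic trigger is needed.
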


We now describe the protocol's state space and its transition function (see also \cref{alg:exact}).
Afterward, we give the proof of \cref{thm:majority_clocks_exact}.
{\input{algorithm-stable-majority}}

Each node $u$ executes a slow but exact protocol \ProtocolBackupMaj/\footnote{%
    We use the $4$-state protocol from~\cite{DBLP:conf/icalp/MertziosNRS14} for this, which stabilizes in $\LDAUOmicron{n^2 \log n}$ interactions in expectation, implying a finite stabilization time and, thus, exactness.
} (\cref{ln:stablealg:backup}) as well as up to $h$ rounds of our fast but possibly incorrect \ProtocolSimpleMaj{s, h} (\crefrange{ln:stablealg:callsimple:start}{ln:stablealg:callsimple:stop}), with $h \coloneqq \ceil{\log_s(4n)} + 2$.
As output, we use the output of the backup protocol if the phase clock is not yet running ($u$'s phase counter is zero and $\PCoverflowed{u} = \bFalse/$) or if $u$ thinks that protocol \ProtocolSimpleMaj{s, h} failed (an error bit is set).
Otherwise, we use the output of \ProtocolSimpleMaj{s, h}.

Node $u$ stops \ProtocolSimpleMaj{s, h} via a \emph{finished bit} \Finished{u} and checks whether \ProtocolSimpleMaj{s, h} failed via an \emph{error bit} \Error{u}.
Both bits are initially \bFalse/ and are spread via a one-way epidemic  (\crefrange{ln:stablealg:onewayep:finished}{ln:stablealg:onewayep:error}).
\ProtocolSimpleMaj{s, h} is executed only while both bits are \bFalse/ (\crefrange{ln:stablealg:callsimple:start}{ln:stablealg:callsimple:stop}).

The (first) finished bit is set for one of two reasons (\crefrange{ln:stablealg:finished:start}{ln:stablealg:finished:stop}):
\begin{enumerate*}[(i)]
\item $u$ reached hour $h$ (i.e., its phase counter overflowed).
    This marks the end of the first $h$ rounds.
    Stopping at this point ensures that any load balancing operation happens between two nodes in the same hour (\cref{obs:stable:benefit_h}).
\item $u$ has absolute load at least $3s$ after its first\footnote{%
        Note that the absolute load of a node $u$ can only increase to $\geq 3s$ because of a load explosion.
        So when the condition $\abs{\Load{u}} \geq 3s$ holds for the first time, that node just went through a load explosion and, thus, just entered a new hour.
    } interaction in an hour $i$.
    Then it had absolute load at least $3$ at the end of round $i-1$.
    If \ProtocolSimpleMaj{s, h} managed to balance the loads during round $i-1$, the load of any other node differs by at most $2$.
    Thus, all nodes have the same sign, which we will show to be correct if no node sets its error bit.
\end{enumerate*}

The (first) error bit is also set for one of two reasons (\crefrange{ln:stablealg:error:start}{ln:stablealg:error:stop}):
\begin{enumerate*}[(i)]
\item Two finished nodes whose loads have different signs interact with each other, in which case \ProtocolSimpleMaj{s, h} obviously failed.
\item A node skipped an hour.
    Then it is no longer true that a node in hour $i$ experienced exactly $i$ load explosions.
    This might cause \ProtocolSimpleMaj{s, h} to fail, since two nodes that experienced a different number of load explosions might balance their loads.
\end{enumerate*}

Since the backup protocol is exact, our protocol is exact if the error bit is set.
A major part of the analysis is to show that it is also exact if none of the error bits is set.
Moreover, we have to show that, with high probability, no error bit is set and the protocol stabilizes fast.
\begin{proof}[Proof of \cref{thm:majority_clocks_exact}]
Let us first bound the number of states per node.
By \cref{thm:majority_clocks}, \ProtocolSimpleMaj{s, h} requires $\LDAUTheta{h s + \log\log n} = \LDAUTheta{s \cdot \log_s n}$ states.
This is increased by a constant factor from the $4$ states for \ProtocolBackupMaj/ and the $4$ combinations of the bits \Finished{u} and \Error{u}, yielding the desired bound.

Next, we prove that \ProtocolStableMaj{s} is exact.
That is, if \TimeST/ denotes the stabilization time of protocol \ProtocolStableMaj{s}, we show that $\TimeST/ < \infty$ with probability $1$.
We distinguish three cases:
\begin{enumerate}[(i), wide=1em]
\item\label[case]{it:majority_clocks_exact:exactcases:i} \emph{The phase clock does not start:}
    That is, in \ProtocolFormJuntaExt/ all nodes set their activity bit to $0$ before reaching level $\lmax$.
    No node is marked, such that the phase counters cannot increase and $\PCoverflowed{u}$ always returns $\bFalse/$.
    Then all nodes forever use the output of the backup protocol, which has finite stabilization time.
    Thus, $\TimeST/ < \infty$ in this case.

\item\label[case]{it:majority_clocks_exact:exactcases:ii} \emph{The phase clock starts and some node sets its error bit:}
    The error bit is spread via a one-way epidemic (\cref{ln:stablealg:onewayep:error}).
    Thus, with probability $1$ eventually all nodes set their error bit.
    From then on, they use the output of the backup protocol, yielding again $\TimeST/ < \infty$.

\item\label[case]{it:majority_clocks_exact:exactcases:iii} \emph{The phase clock starts and no node ever sets its error bit:}
    If the phase clock runs, \ProtocolFormJuntaExt/ marks at least one node and, eventually, all nodes $u$ set their finished bit \Finished{u}:
    Indeed, nodes with an unset finished bit execute the phase clock (via \ProtocolSimpleMaj{s, h}), such that they have a non-zero probability to increase their phase counter (since there is a marked node, see \cref{sec:phase_clock:protocol}).
    Thus, eventually the phase counter overflows and the finished bit is set (\crefrange{ln:stablealg:finished:start}{ln:stablealg:finished:stop}).

    Let $T < \infty$ denote the interaction after which all finished bits are set.
    Since no error bit is ever set, no node ever skips an hour (\crefrange{ln:stablealg:error:start}{ln:stablealg:error:stop}).
    Thus, any two nodes that balance their loads are not only in the same hour $i < h$ (as checked by \ProtocolSimpleMaj{s, h}) but also experienced both exactly $i$ load explosions.
    Moreover, after interaction $T$ the loads no longer change and all nodes have the same load signs (otherwise, eventually two finished nodes of different sign meet and an error bit is set, contradicting the case assumption).
    Thus, by \cref{lem:correct_opinion}, all nodes forever agree on the correct initial majority after interaction $T$, such that $\TimeST/ \leq T < \infty$.
\end{enumerate}

It remains to prove the \lcnamecref{thm:majority_clocks_exact}'s runtime bounds. We first show that $\TimeST/ = \ldauOmicron{n \log n \cdot \log_s(n / \BiasAbsolute)}$ \whplong/.

To this end, let $H^* \in \N$ denote the maximal hour ever reached by any node and for $i \in \set{0, 1, \dots, H^*-1}$ let $T_i$ be the last interaction of round $i$.
Define $T^*$ as the first interaction during which some node sets its finished or error bit.
By \crefrange{ln:stablealg:finished:start}{ln:stablealg:error:stop}, the \emph{first} finish or error bit is set because of three possible reasons:
\begin{enumerate}[(i)]
\item\label[reas]{it:prf:stablmaj:reason:1} a node had load at least $3s$ after its first interaction in an hour (finished bit),
\item\label[reas]{it:prf:stablmaj:reason:2} a node's phase counter overflowed (finished bit), or
\item\label[reas]{it:prf:stablmaj:reason:3} a node skipped an hour (error bit).
\end{enumerate}
In a similar way to the proof of \cref{thm:majority_clocks}, we first show that, \whplong/, \cref{it:prf:stablmaj:reason:1} applies and that all nodes agree on the correct initial majority after $T^*$ interactions without setting the error bit.
At that moment, we might not yet have stabilized, since there's still a non-zero probability for a node to set the error bit because of \cref{it:prf:stablmaj:reason:3}.
But \whplong/ that won't happen before all nodes set their finished bit by the one-way epidemic (\cref{ln:stablealg:onewayep:finished}), after which the error bit cannot be set anymore.
We formalize this idea below.

Note that the finished bit is set when a node reaches hour $h$ (its phase counter overflows), so $H^* \leq h$.
As in \cref{thm:majority_clocks}'s proof, we apply \cref{lem:phaseclock, lem:loadbalancing} via a union bound to the first $H^* \leq h = \LDAUOmicron{\log n}$ rounds to get, \whplong/, the following properties:
\begin{enumerate}[(1)]
\item\label[prop]{it:prf:stablmaj:fast:0} $T^* < \infty$ (the phase clock runs and some node sets its error bit or, eventually, its phase counter overflows).
\item\label[prop]{it:prf:stablmaj:fast:1} For all $i \in \set{0, 1, \dots, H^*-1}$, we have $\RoundLength(i) = \LDAUOmega{n \log n}$ and $\RoundStretch(i) = \LDAUOmicron{n \log n}$ (\cref{lem:phaseclock}).
\item\label[prop]{it:prf:stablmaj:fast:2} For all $i \in \set{0, 1, \dots, H^*-1}$, the loads have discrepancy at most $2$ after interaction $T_i$ (\cref{lem:loadbalancing}).
\end{enumerate}
As in the proof of \cref{thm:majority_clocks}, \cref{it:prf:stablmaj:fast:1} implies that no node ever skips an hour.

In the remainder we condition on the high probability event that the above \lcnamecrefs{it:prf:stablmaj:fast:0} hold.
Since no node ever skips an hour, whenever two nodes in hour $i \in \set{0, 1, \dots, H^* - 1}$ balance their loads, both of them experienced exactly $i$ load explosions.
Thus, \cref{lem:correct_opinion} gives $\STLoad{T_i} = \STLoad{0}$ for all $i \in \set{0, 1, \dots, H^* - 1}$.
With this, we can show that $H^* - 1 \leq \ceil{\log_s(4n / \BiasAbsolute)} \eqqcolon i^*$:
Indeed, otherwise all nodes go through round $i^*$ and a similar calculation as in \cref{thm:majority_clocks}'s proof yields
\begin{math}
     \abs{\TLoad{T_{i^*}}}
=    \abs{\STLoad{T_{i^*}}} \cdot s^{i^*}
=    \abs{\STLoad{0}} \cdot s^{i^*}
=    \alpha \cdot s^{i^*}
\geq 4n
\end{math}.
By an average argument as in \cref{thm:majority_clocks}'s proof, all nodes have absolute load $\geq 3$ after interaction $T_{i^*}$.
This implies that any node reaching hour $i^* + 1 < H^*$ has absolute load $\geq 3s$ after the load explosion and sets its finished bit, contradicting $H^*$'s choice ($i^*+1$ would be the maximal hour).

Thus, we have $H^* - 1 \leq i^* \leq h - 2$.
Let $u^*$ denote the initiator of interaction $T^*$ and remember the three possible reasons why $u^*$ could have set its finished or error bit (\crefrange{it:prf:stablmaj:reason:1}{it:prf:stablmaj:reason:3}).
\Cref{it:prf:stablmaj:reason:3} does not apply since no node skipped any hour.
\Cref{it:prf:stablmaj:reason:2} does not apply since the maximal hour is $H^* \leq i^* + 1 \leq h - 1$, so no node's phase counter overflows.
Thus, $u^*$ set its finished bit because of \cref{it:prf:stablmaj:reason:1}: it had absolute load at least $3s$ after its first interaction in hour $H^*$.
Then $u^*$ had absolute load at least $3$ after $T_{H^*-1}$ interactions (the end of round $H^* - 1$).
Together with \cref{it:prf:stablmaj:fast:2}, either all nodes had load at least $1$ or all nodes had load at most $-1$ after $T_{H^*-1}$ interactions.
In particular, all nodes have the same sign, which cannot change subsequently.
Since we already saw that load balancing happens only between nodes that experienced the same number of load explosions, \cref{lem:correct_opinion} implies that the nodes' sign is also the initial majority.

In summary, \whplong/, after $T_{H^* - 1}$ interactions, no error bit is set and all nodes \emph{forever} have the correct load sign.
This is still the case after interaction $T^*$.
Note that this does not imply $\TimeST/ \leq T^*$, since so far only one node finished and there is still a non-zero probability that some node skips an hour and, thus, sets the error bit after interaction $T^*$.
However, \whplong/, the finished bit spreads to all nodes within $\TimeINF/ = \LDAUOmicron{n \log n}$ interactions (the infection time, see \cref{lem:epidemic}).
Thus, by using \cref{lem:phaseclock} with a large enough constant $d_1$, we can ensure that, \whplong/, $\RoundLength(H^*) \geq \TimeINF/$, such that no node skips an hour before all finished bits are set.
Once all nodes are finished, the error bit cannot be set anymore, since all nodes have the same sign.

Combining everything above via a union bound, this yields that, \whplong/,
\begin{math}
     \TimeST/
\leq T^* + \TimeINF/
=    H^* \cdot \LDAUOmicron{n \log n} + \TimeINF/
=    \LDAUOmicron{n \log n \cdot \log_s(n / \BiasAbsolute)}
\end{math},
yielding the desired high-probability bound on the stabilization time.

Finally, we show that the stabilization time $\TimeST/$ of \ProtocolStableMaj{s} is $\ldauOmicron{n \log n \cdot \log_s(n / \BiasAbsolute)}$ in expectation.
To this end, observe that we know that, \whplong/, the stabilization time $\TimeST/$ is  $\ldauOmicron{n \log n \cdot \log_s(n / \BiasAbsolute)}$.
That is, for any constant $a > 0$, there is a constant $C > 0$ and appropriate values of the constant protocol parameters such that
\begin{equation}%
\label{eq1}
     \Prob{\TimeST/ \leq C \cdot n \log n \cdot \log_s(n / \BiasAbsolute)}
\geq 1 - n^{-a}
.
\end{equation}

To show that $\TimeST/$ is $\LDAUOmicron{n \log n \cdot \log_s(n / \BiasAbsolute)}$ in expectation, we show the following statement:
For some fixed $\eta$ (independent of the constant parameters of protocol \ProtocolStableMaj{s}), for each sufficiently large $n$, and for each configuration $\cC$ reachable from the initial configuration, the protocol stabilizes from $\cC$ within $n^{\eta}$ interactions in expectation.
Once this is shown, we can calculate
\begin{equation*}
\begin{aligned}
\Ex{\TimeST/}
&\leq
\phantom{+} \Ex{\TimeST/ |  \TimeST/ \leq C\cdot n \log n \cdot \log_s(n / \BiasAbsolute)}
\\&\phantom{{}\leq{}}
+ n^{-a}\cdot \Ex{\TimeST/ | \TimeST/ > C\cdot n \log n \cdot \log_s(n / \BiasAbsolute)}
\\&\leq
\phantom{+} C \cdot n \log n \cdot \log_s(n / \BiasAbsolute)
\\&\phantom{{}\leq{}}
+ n^{-a} \cdot \bigl( C \cdot n \log n \cdot \log_s(n / \BiasAbsolute) + n^{\eta} \bigr) 
\\&\leq
2 C \cdot n \log n \cdot \log_s(n/\BiasAbsolute)
.
\end{aligned}
\end{equation*}
The first inequality above follows from~\cref{eq1}.
The second inequality follows from the bound $n^{\eta}$ on the expected stabilization time from the configuration $\cC$ reached after the first $C \cdot n \log n \cdot \log_s(n / \BiasAbsolute)$ interactions.
Finally, the last inequality holds by taking $a = \eta$.

We use the following facts about the expected running time of some basic protocols:
\begin{enumerate}
\item Protocol \ProtocolBackupMaj/ stabilizes within $\ldauOmicron{n^2 \cdot \log n} \leq n^3$ (for sufficiently large $n$) interactions in expectation~\cite{DBLP:conf/icalp/MertziosNRS14}.

\item The one-way epidemic completes within $\LDAUOmicron{n \log n}$ interactions in expectation.

\item For each $K \geq 1$, the number of interactions required so that each node is the initiator of at least $K$ interactions is $\LDAUOmicron{K n \log n}$ in expectation (a simple consequence from the expected completion time $\LDAUOmicron{n \log n}$ of the coupon collector problem).
\end{enumerate}

Let $\cC$ be any non-stable configuration of the protocol \ProtocolStableMaj{s} that is reachable from the initial configuration.
We distinguish several configuration types with respect to $\cC$:
\begin{enumerate}[wide=0em, itemsep=0.618em]
\item\label[type]{it:majority_clocks_exact:case1} \emph{The phase clocks of all nodes have reached their limit of $h \cdot m$, viewing the clocks (for the purpose of this analysis but without actually altering anything in the protocol) as if there were kept running until reaching the limit:}
    Configuration $\cC$ is not stable, so either there is a node with the error flag raised, or all nodes have their finished flag raised but not all nodes have the same signs of their load.
    Otherwise the configuration is stable.

    In the former case (when there is a node with an error), one instance of one-way epidemic raises the error flag in all nodes in expected $\LDAUOmicron{n \log n}$ interactions, and then protocol \ProtocolStableMaj{s} stabilizes within additional expected $n^3$ interactions by completing \ProtocolBackupMaj/.

    Similarly, in the latter case, two consecutive one-way epidemics (the first one to make two nodes with different load signs meet and the second one to spread out the information about the error) and then the completion of the \ProtocolBackupMaj/ protocol are sufficient to stabilize \ProtocolStableMaj{s}.

    In both cases, the protocol \ProtocolStableMaj{s} stabilizes within additional $\LDAUOmicron{n^3}$ interactions  in expectation.

\item\label[type]{it:majority_clocks_exact:case2} \emph{There is at least one marked node (by protocol \ProtocolFormJuntaExt/), but there is still at least one node whose clock has not yet reached the limit of $h \cdot m$ (as above, we view the clocks as if running until the limit):}
    The marked node with the largest clock value increases its clock within one instance of one-way epidemics.
    Thus within at most $h \cdot m$ consecutive instances of one-way epidemics, one marked node reaches the clock limit.
    One additional one-way epidemic makes all clocks reach the limit.
    This takes $(h \cdot m + 1) \cdot \ldauOmicron{n \log n} = \ldauOmicron{n \log n\cdot \log_s(n)}$ interactions in expectation and takes us to a configuration of \cref{it:majority_clocks_exact:case1}.

\item \emph{No node is marked (by protocol \ProtocolFormJuntaExt/):}
    We consider two sub-cases:
    \begin{enumerate}
    \item\label[type]{it:majority_clocks_exact:case3a} \emph{All nodes in \ProtocolFormJuntaExt/ are inactive:}
        No node is ever marked and the phase clock never starts.
        Thus, \ProtocolStableMaj{s} stabilizes in $n^3$ additional interactions in expectation (via \ProtocolBackupMaj/).

    \item\label[type]{it:majority_clocks_exact:case3b} \emph{There is at least one active node in \ProtocolFormJuntaExt/:}
        If an active node is the initiator of an interaction, then it either increases its junta level or becomes inactive.
        Thus, when this node initiated $\lmax$ interactions, either it reached level $\lmax$ and got marked, or it has become inactive.
        So within $\lmax \cdot \ldauOmicron{n \log n} = \ldauOmicron{n\log n \cdot \log\log n}$ interactions in expectation (after each node initiated at least $\lmax$ interactions) we either reach a configuration with the first marked node (\cref{it:majority_clocks_exact:case2}) or a configuration with no marked node but only inactive nodes (\cref{it:majority_clocks_exact:case3a}).
    \end{enumerate}
\end{enumerate}

In summary, we see that \ProtocolStableMaj{s} stabilizes from any configuration $\cC$ reachable from the initial configuration within $\LDAUOmicron{n^3} \leq n^4$ (for sufficiently large $n$) interactions in expectation.
\end{proof}
}
{\section{Convergent Majority}%
\label{sec:convergent-majority}

In this section, we present and analyze the protocol \ProtocolConvergentMaj{s}, a hybrid majority protocol which converges efficiently.
The main idea of the protocol is that all nodes execute \ProtocolSimpleMaj{s,3}, which converges quickly.
However, there is a positive probability that it returns the wrong answer without detecting the error.
Therefore, every node switches its output to the backup protocol after a (polynomially) long time.
To determine that this time has passed, we use a simple approach based on counting the number of consecutive interactions with junta nodes.
Formally, we prove the following \lcnamecref{thm:convergent-majority}:
\begin{theorem}%
\label{thm:convergent-majority}
Let $s \in \set{2, 3, \dots, n}$.
Consider the majority problem for $n$ nodes with initial absolute bias $\BiasAbsolute \in \N$.
Protocol \ProtocolConvergentMaj{s} is exact and converges \whplong/ and in expectation in $\ldauOmicron{n \log n \cdot \log_s(n / \BiasAbsolute)}$ interactions.
The protocol uses $\ldauTheta{s + \log\log n}$ states per node.
\end{theorem}

We now describe the protocol's state space and its transition function (see also \cref{alg:convergent}).
Afterward, we give the proof of \cref{thm:convergent-majority}.
{\input{algorithm-convergent-majority}}

Nodes first execute a backup protocol \ProtocolBackupMaj/\footnote{%
    As before, in \cref{sec:majority_clocks_exact}, we use the $4$-state protocol from~\cite{DBLP:conf/icalp/MertziosNRS14} for this, which stabilizes in $\LDAUOmicron{n^2 \log n}$ interactions in expectation, implying a finite stabilization time and, thus, exactness.
}.
Additionally, each node $u$ executes protocol \ProtocolSimpleMaj{s, 3} as long as it did not encounter $600$ marked nodes in a row.
The number of such encounters is stored in a counter value $\Count{u} \in \set{0, 1, \dots, 600}$.
The value $600$ is chosen merely for convenience and has no special meaning.
It simply ensures that it takes a long time before a node permanently changes its output to that of the backup protocol (see next paragraph).

The output function maps the state of a node $u$ to a majority guess as follows: Use the output of the backup protocol if the phase counter of the phase clock is zero or if the counter \Count{u} has reached $600$.
Otherwise, use the output of protocol \ProtocolSimpleMaj{s, 3}.
Switching eventually to the backup protocol's solution ensures that – even if \ProtocolSimpleMaj{s, 3} fails – the protocol is exact.
Using the output of \ProtocolSimpleMaj{s, 3} in between (and switching to the backup protocol's solution only after a long time, when it is correct \whplong/) implies that, \whplong/, convergence (but not stability) is achieved fast.

\begin{proof}[Proof of \cref{thm:convergent-majority}]
We first show that our protocol \ProtocolConvergentMaj{s} is exact. This
follows easily by considering the following two cases:
\begin{enumerate}
\item The phase clock never starts (no node is selected into the underlying junta).
    Since in this case all counters remain zero forever, the output of the protocol \ProtocolConvergentMaj{s} equals the output of \ProtocolBackupMaj/, which is exact.

\item The phase clock starts (meaning the junta is not empty).
    The probability for a node $u$ to increase its $\Count{u}$ in the next interaction is at least $1/n^2$ ($u$ initiates the interaction with a marked node as responder).
    This happens $600$ times in a row with probability at least $1/n^{1200} > 0$ (a crude but sufficient bound).
    Thus, eventually all nodes $u$ reach $\Count{u} = 600$.
    From that point on the output of \ProtocolConvergentMaj{s} equals that of \ProtocolBackupMaj/, which is exact.
\end{enumerate}
This shows that protocol \ProtocolConvergentMaj{s} is exact. The bound on the
number of states per nodes follows also easily: Since \ProtocolBackupMaj/
requires only four states and the counters are bounded by the constant $600$,
the number of states per node is a constant factor times the number of states
required by \ProtocolSimpleMaj{s, 3}, which is $\LDAUTheta{s + \log\log n}$.

It remains to prove the convergence time bound for the
\ProtocolConvergentMaj{s} protocol. Assuming a non-empty junta (which, by
\cref{thm:junta}, holds \whplong/ after $\LDAUOmicron{n \log n}$ interactions),
we can derive the desired bound from the following observations.
\begin{enumerate}[(i)]
\item\label[obs]{it:convergent-majority:i} Once the junta is established, the output of protocol \ProtocolConvergentMaj{s} during the next $\poly_1(n)$ interactions equals, \whplong/, that of \ProtocolSimpleMaj{s, 3}.
    That protocol converges \whplong/ in at most $\LDAUOmicron{n \log n \cdot \log_s(n / \BiasAbsolute)}$ interactions.

\item\label[obs]{it:convergent-majority:ii} After $\poly_1(n)$ interactions, the nodes' outputs start switching gradually to the output of \ProtocolBackupMaj/, which stabilizes in $\LDAUOmicron{n^2 \log n}$ interactions.
    After $\poly_2(n) > \poly_1(n)$ interactions, \whplong/ all nodes have switched their output to \ProtocolBackupMaj/.

\end{enumerate}
By choosing parameters such that $\poly_1(n) = n^3$, the switch to the backup protocol happens only when it has, \whplong/, stabilized.
Thus, \whplong/, the subprotocol \ProtocolSimpleMaj{s, 3} converges to the correct outcome within $\LDAUOmicron{n \log n \cdot \log_s(n / \BiasAbsolute)}$ interactions and, when the nodes start switching their output to \ProtocolBackupMaj/, that subprotocol also has the correct output.
Together, this implies the desired bound on the convergence time.

To see \cref{it:convergent-majority:i}, note that, once there is a non-empty junta of size at most $n^{\chosenoneminusxi/}$ (which happens \whplong/ in $\LDAUOmicron{n \log n}$ interactions by \cref{thm:junta}), the probability that a node samples a junta node $600$ times in a row is at most ${(n^{\chosenoneminusxi/}/n)}^{600} = n^{-12}$.
Using a union bound, \whplong/ no node reaches counter value $600$ (and switches to the backup protocol) before $\LDAUTheta{n^3}$ interactions.
\Cref{it:convergent-majority:ii} follows by a simple Markov bound applied to the expected number of interactions a node requires to switch its output back to the backup protocol (which is upper bounded by $\LDAUOmicron{n^{600}}$) together with a union bound over all nodes.

It remains to bound the expected convergence time $\TimeC/$ of \ProtocolConvergentMaj{s}.
For this, using the same argument as in the proof of \cref{thm:majority_clocks_exact}, it is sufficient to show the following statement:
For some fixed $\eta$, for each sufficiently large $n$, and for each configuration $\cC$ reachable from the initial configuration, the protocol stabilizes from $\cC$ within $n^{\eta}$ interactions in expectation.
Once this is shown, the same calculation as for \cref{thm:majority_clocks_exact} yields
\begin{math}
     \Ex{\TimeC/}
\leq 2C \cdot n \log n \cdot \log_s(n/\BiasAbsolute)
\end{math}.
To this end, we proceed as in the proof of \cref{thm:majority_clocks_exact} and distinguish the following configuration types:
\begin{enumerate}[wide=0em, itemsep=0.618em]
\item\label[type]{it:convergent-majority:conftype:case1} \emph{All nodes have switched to the backup protocol:}
    Then \ProtocolConvergentMaj{s} stabilizes in $n^3$ additional interactions in expectation (via \ProtocolBackupMaj/).

\item\label[type]{it:convergent-majority:conftype:case2} \emph{Some node has not yet switched to the backup protocol:}
    The time until all nodes switch to the backup protocol is dominated by the sum of $n$ geometrically distributed random variables with parameter $\geq 1 / n^{1200}$ (see the exactness proof above).
    Thus, there is a constant $\eta$ such that all nodes switch their output to \ProtocolBackupMaj/ after at most $n^{\eta-1}$ many interactions in expectation.
    This takes us to a configurations of \cref{it:convergent-majority:conftype:case1}.
\end{enumerate}

In summary, wee see that \ProtocolConvergentMaj{s} converges from any configuration $\cC$ reachable from the initial configuration within $n^{\eta-1} + n^3 \leq n^{\eta}$ (for sufficiently large $n$) interactions in expectation.
\end{proof}
}
{\section{A Note on Uniformity}%
\label{sec:uniform-population-protocols}

Uniformity in population protocols means that a single algorithm is designed to work for populations of any size.
In particular, nodes have no information on the population size $n$.
Protocols where nodes are restricted to a constant number of states are always uniform.
But most of the newer protocols allow for a super-constant number of states and use some upper bounds on $n$, so they are not uniform.
In particular, protocols that stop their computation once a counter reaches a value of $\polylog(n)$ fall into this category of non-uniform protocols.
This \lcnamecref{sec:uniform-population-protocols} presents a uniform population protocol for majority.

\paragraph{Uniform Population Model}
To study \emph{uniform} population protocols whose state requirements increase with the population size $n$, the original model – which considers nodes as finite-state machines (FSM), see \cref{sec:model} – turns out to be inadequate.
Indeed, if each node is an FSM with a state space of size $f(n)$ for a non-trivial function $f$, then the nodes and, thus, the protocol inherently depend on $n$ and cannot be simply \enquote{deployed} in a population of different size.

\Textcite{DBLP:conf/podc/DotyE19} introduce a generalized population model that is better suited for this scenario and which we adopt in the remainder of this \lcnamecref{sec:uniform-population-protocols}.
In their model, each node is represented by a $2$-tape deterministic Turing machine (TM).
We assume that both tapes are infinite to the left and right and that the origin is marked by a special \emph{origin symbol}.
Tape~1 (read-only) is called the \emph{input tape} and tape~2 (read-write) the \emph{working tape}.
One two-way infinite working tape is sufficient for us since it allows maintaining two unbounded variables, as required in our protocol.
For protocols with more unbounded variables, similarly to~\cite{DBLP:journals/corr/abs-1805-04832} one can use a TM with as many (one-sided infinite) input/working tapes as there are unbounded variables (whose number must not depend on $n$).

At the beginning of any interaction, a node's working tape is identical to its working tape at the end of the previous interaction.
Whenever two nodes interact, they copy each other's working tape onto their own input tape and restart their TM by entering a start TM-state (which then computes a new state, updates the node's working tape, and halts).
We define the number of states used during a protocol execution as $\abs{\Sigma}^s$, where $\Sigma$ is the (binary) tape alphabet and $s$ is the maximum number of tape cells written by any node during the execution.

Having the above formal model in mind, our description sticks with the standard population protocol terminology.
In particular, we assume a suitable encoding of the nodes' states using the alphabet $\Sigma$ and simply identify the content of a node's working tape with its state.
An important implication of the model is that nodes might now use an unbounded number of states (write an unbounded number of cells on the working tape).
However, in our uniform majority protocol, the number of used states is finite with probability $1$ and $\ldauOmicron{\log n \cdot \log\log n}$ in the population size $n$ \whplong/.

\paragraph{Uniform Majority}
One of the rare examples of a uniform protocol whose state requirements increase with $n$ is the junta protocol from~\cite{DBLP:conf/soda/GasieniecS18}; we refer to it as \ProtocolFormJunta/.
Observe that our protocol \ProtocolFormJuntaExt/ is not uniform, as nodes need to know $\lmax = \floor{\log\log n} - 3$ in order to mark themselves (see \cref{sec:junta_creation}).
See below for a brief description of \ProtocolFormJunta/.

Since our majority protocols from the previous sections use the non-uniform junta \ProtocolFormJuntaExt/, none of them is uniform.
In fact, to the best of our knowledge, until now there was no exact, uniform majority protocol that would stabilize \whplong/ in $n^{2 - \LDAUOmega{1}}$ interactions.
The following \lcnamecref{thm:majority:stabilizing:uniform} shows that we get such a uniform majority protocol by applying slight modifications to protocol \ProtocolStableMaj{s}.
\begin{theorem}%
\label{thm:majority:stabilizing:uniform}
Let $s \in \N \setminus \set{1}$ be a constant.
Consider the majority problem for $n$ nodes with initial absolute bias $\BiasAbsolute \in \N$.
Protocol \ProtocolUniformMaj{s} is an exact and uniform variant of \ProtocolStableMaj{s}.
\Whplong/ and in expectation, it stabilizes in $\ldauOmicron{n \log n \cdot \log_s(n / \BiasAbsolute)}$ interactions.
While the number of used states can be arbitrarily high with non-zero probability, \whplong/ it uses only $s \cdot \ldauOmicron{\log_s(n / \BiasAbsolute) \cdot \log\log n}$ states per node.
\end{theorem}
Note that, in order for \ProtocolUniformMaj{s} to be uniform, the parameter $s$ must be constant (i.e., $s$ may not depend on $n$).

Protocol \ProtocolUniformMaj{s} is identical to protocol \ProtocolStableMaj{s} with the following changes:
\begin{enumerate}[noitemsep]
\item It uses subprotocol $\ProtocolSimpleMaj{s, \infty}$ instead of $\ProtocolSimpleMaj{s, \ceil{\log_s(4n)} + 2}$.
    In particular, the phase clock \ProtocolPhaseClock{\infty} is used, which cannot overflow. Thus, nodes always know their exact hour.

\item The phase clock uses \ProtocolFormJunta/ instead of \ProtocolFormJuntaExt/.
\end{enumerate}
Using the original junta instead of ours has the drawback that nodes must remember their level from the junta calculation indefinitely.
However, since protocol \ProtocolFormJuntaExt/ is inherently non-uniform, this seems unavoidable when aiming for a uniform protocol.

For the sake of completeness, we give a brief description of
\ProtocolFormJunta/, using slightly different wording and notation than
in~\cite{DBLP:conf/soda/GasieniecS18}, in order to fit it into our
framework\footnote{%
    Technically, the described protocol differs slightly from the one
    in~\cite{DBLP:conf/soda/GasieniecS18}: The first interaction of a node is
    slightly changed – as described in \cref{sec:junta-calculation} – in order
    to enable us to prove a lower bound on the maximum level reached by any
    node. This property is not required for the uniform protocol, so one could
    use the original junta protocol from~\cite{DBLP:conf/soda/GasieniecS18}.
}. We also describe how the phase clock is adapted to the changed junta
protocol. Afterward, we give the proof of
\cref{thm:majority:stabilizing:uniform}.

\paragraph{Description of \ProtocolUniformMaj{s}}
As our junta protocol \ProtocolFormJuntaExt/, protocol \ProtocolFormJunta/ is based on the level calculation described in \cref{sec:level-calculation}.
Recall that the level calculation uses a level $l$, an activity bit $a$, and the transition function described by \cref{eq:level-calculation}.
In addition to the level $l$ and activity bit $a$, each node stores a \emph{marker bit} $b \in \set{0, 1}$ (indicating whether the node assumes to be in the junta or not) and a \emph{defeated bit} $d \in \set{0, 1}$.
Initially, all nodes have $b = 0$ and $d = 0$.
A node that just became inactive at a level $l \geq 1$ sets $b$ to $1$.
If an inactive node at level $l \geq 1$ encounters a node on a higher level, it becomes \emph{defeated}: it sets $d$ to $1$, $b$ to $0$, and will from now on simply adopt the larger level during any interaction (not changing any of its other state values related to the junta).
If encountered by another node in the level calculation, a defeated node is treated as if it were in state $(0, 0)$, independent of its actual level counter $l$.

\paragraph{Phase Clocks on different Levels \& Reset}
Compared to \ProtocolFormJuntaExt/, any (inactive) node starts with the belief
of being in the junta until it becomes defeated by a node from a higher level.
This ensures that the junta is never empty. However, when used in the phase
clock protocol, there will be a large number of nodes in the junta for the
first few interactions (until lower-level nodes become defeated), causing the
phase clock to run too fast.

To avoid problems in the protocol relying on the synchronization of the phase
clock, nodes now also use the level (from the junta protocol) in the phase
clock protocol. This basically results in multiple phase clocks running on
different levels. When a node running a phase clock on level $l$ encounters a
node running a phase clock on a higher level $l'$, it resets its phase counter
to zero (and – by the junta protocol – updates its level to $l'$). This
\emph{reset} also triggers a reset of the protocol using the phase clock. In
our case this is the majority protocol \ProtocolStableMaj{s}. For a node $u$
this entails a reset of the bits $\Finished{u}$ and $\Error{u}$ to $0$, and a
reset of the load value from \ProtocolSimpleMaj{s, \infty} to $\pm 1$,
depending on the original opinion of $u$. This idea of phase clocks running on
different levels and a corresponding reset was first proposed and used
in~\cite{DBLP:conf/soda/GasieniecS18} for the case of leader election.

\begin{proof}[Proof of \cref{thm:majority:stabilizing:uniform}]
First note that the protocol requires no knowledge of $n$, meaning that it is uniform.
The remaining proof is similar to that of \cref{thm:majority_clocks_exact}.
In fact, having unbounded phase counters (which avoid overflows in the phase clock) and the guarantee from \ProtocolFormJunta/ that the junta is never empty simplify the argumentation considerably.
Also note that the exact phase counters guarantee that any load balancing action is \emph{always} guaranteed to be done only between nodes in the same hour.

Let $\TimeST/$ denote the stabilization time of protocol \ProtocolUniformMaj{s}.
To proof exactness, remember the three cases from the exactness proof of \cref{thm:majority_clocks_exact}.
The first two cases are trivial:
\Cref{it:majority_clocks_exact:exactcases:i} (phase clock does not start) cannot occur, since we use \ProtocolFormJunta/.
\Cref{it:majority_clocks_exact:exactcases:ii} (phase clock starts and some error bit is set) is identical, since with probability $1$ eventually all nodes set their error bit and use the output of the backup protocol.
For \cref{it:majority_clocks_exact:exactcases:iii} (phase clock starts and no error bits are ever set) we again first show that all nodes finish with probability $1$.
However, in \cref{thm:majority_clocks_exact} this was proven via the overflowing phase counters, which cannot happen for $h = \infty$.
Thus, we use a different argument:
For the sake of a contradiction assume no node finishes (if one node finishes, all nodes finish eventually).
Since the phase clock runs and no error ever occurs, all nodes reach any hour $i \in \N$.
Consider an interaction $t$ when all nodes are in hour at least $\iota \coloneqq \ceil{\log_s(s \cdot 3n / \BiasAbsolute)}$.
Since load balancing is only performed between nodes in the same hour and no node ever skips an hour (or an error would occur), \cref{lem:correct_opinion} gives $\Psi(t) = \Psi(0)$.
But then, similar to previous arguments, our choice of $\iota$ ensures that $\abs{\TLoad{t}} \geq s \cdot 3n$.
So there would be a node $u$ with absolute load at least $3s$.
This yields the desired contradiction, since $u$ would have set its finished bit at the beginning of its current hour.
Once we know that all node finish with probability $1$ in this case, the exactness follows again as in \cref{it:majority_clocks_exact:exactcases:iii} in the exactness proof of \cref{thm:majority_clocks_exact}.

To prove that, \whplong/, we have $\TimeST/ = \ldauOmicron{n \log n \cdot \log_s(n / \BiasAbsolute)}$, we use the same argumentation as for the corresponding part in the proof of \cref{thm:majority_clocks_exact}, just slightly simplified since nodes now store their exact phase counters.
Basically, we again take a union bound over the first $i^* \coloneqq \ceil{\log_s(4n / \alpha)}$ rounds and get the same three properties as in the proof of \cref{thm:majority_clocks_exact}:
\begin{enumerate*}[(1)]
\item $T^* < \infty$ (as we have shown above for the exactness) for the interaction $T^*$ when the first finish or error bit is set.
\item The first $i^*$ rounds have length $\LDAUOmega{n \log n}$ and stretch $\LDAUOmicron{n \log n}$.
\item The load discrepancy is at most $2$ at the end of each of the first $i^*$ rounds.
\end{enumerate*}
With these properties, the remaining argumentation from \cref{thm:majority_clocks_exact}'s proof goes through.

The proof for the bound on the expected stabilization time is also identical, by noting that all nodes complete protocol \ProtocolFormJunta/ in expected $\LdauOmicron{n \log n}$ interactions~\cite{DBLP:conf/soda/GasieniecS18}.

For the bound on the number of states, note that replacing the phase clock's junta algorithm \ProtocolFormJuntaExt/ by \ProtocolFormJunta/ increases the required number of states by a \emph{factor} of $\LDAUTheta{\log\log n}$ (instead of an \emph{additive} term), since we now need to store the level indefinitely.
Now, above we saw that, \whplong/, all nodes finish after at most $\ldauOmicron{\log_s(n / \BiasAbsolute)}$ rounds.
Thus, \whplong/, no phase counter is larger than $\ldauOmicron{\log_s(n / \BiasAbsolute)}$ in absolute value.
Finally, there is a factor of $(3s+1) \cdot \LDAUTheta{1} = s \cdot \LDAUTheta{1}$ for the load values and the bits $\Finished{u}$ and $\Error{u}$, yielding the desired bound.
\end{proof}
}
{\section{Conclusions \& Future Work}

We analyzed three similar variants of a population protocol for the majority problem: \ProtocolSimpleMaj{s, 3}, \ProtocolConvergentMaj{s}, and \ProtocolUniformMaj{s}.
All of them based on the so-called \emph{doubling and cancellation approach}.
They feature a parameter $s$ that allows for a trade-off between runtime and memory per node.

A natural open question is to improve the bounds we provide.
In particular, for $s = \log\log n$ our protocol \ProtocolStableMaj{s} has stabilization time $\ldauomicron{n \cdot {(\log n)}^2}$ while using $\LDAUOmicron{\polylog n}$ states.
There is (to the best of our knowledge) one other result that also achieves this guarantee~\cite{DBLP:conf/wdag/BerenbrinkEFKKR18}.
While it does not feature a trade-off capability, it comes with a better stabilization time.
It seems non-trivial but also not impossible to combine our trade-off result with the improved stabilization time.
Also, it would be interesting whether it is possible to derive parameterized \emph{lower bounds} in which one can similarly see the effect on the running time of increasing or decreasing the number of states per node.

Another open research question for population protocols deals with the phase clock introduced in~\cite{DBLP:conf/soda/GasieniecS18}.
It is unclear whether one can derive a similar phase clock that requires only a constant number of states and still synchronizes the population for a polynomial number of interactions \whplong/.
If it exists, such a phase clock could be used to devise constant-state (majority) protocols that converge in polylogarithmic time \whplong/.

Our results formally show that lower bounds for the stabilization time can be bypassed by considering the convergence time.
Unfortunately, there are currently no strong lower bounds regarding the convergence time.
As convergence time might be considered the more practical runtime notion, finding such lower bounds and tightening the corresponding upper bounds should be deemed a worthy but challenging task.
}

\printbibliography

\begin{thebibliography}{21}
\providecommand{\natexlab}[1]{#1}
\providecommand{\url}[1]{\texttt{#1}}
\expandafter\ifx\csname urlstyle\endcsname\relax
  \providecommand{\doi}[1]{doi: #1}\else
  \providecommand{\doi}{doi: \begingroup \urlstyle{rm}\Url}\fi

\bibitem[Alistarh et~al.(2015)Alistarh, Gelashvili, and
  Vojnovic]{DBLP:conf/podc/AlistarhGV15}
D.~Alistarh, R.~Gelashvili, and M.~Vojnovic.
\newblock Fast and exact majority in population protocols.
\newblock In C.~Georgiou and P.~G. Spirakis, editors, \emph{Proceedings of the
  2015 {ACM} Symposium on Principles of Distributed Computing, {PODC} 2015,
  Donostia-San Sebasti{\'{a}}n, Spain, July 21 - 23, 2015}, pages 47--56.
  {ACM}, 2015.
\newblock \doi{10.1145/2767386.2767429}.
\newblock URL \url{https://doi.org/10.1145/2767386.2767429}.

\bibitem[Alistarh et~al.(2017)Alistarh, Aspnes, Eisenstat, Gelashvili, and
  Rivest]{DBLP:conf/soda/AlistarhAEGR17}
D.~Alistarh, J.~Aspnes, D.~Eisenstat, R.~Gelashvili, and R.~L. Rivest.
\newblock Time-space trade-offs in population protocols.
\newblock In P.~N. Klein, editor, \emph{Proceedings of the Twenty-Eighth Annual
  {ACM-SIAM} Symposium on Discrete Algorithms, {SODA} 2017, Barcelona, Spain,
  Hotel Porta Fira, January 16-19}, pages 2560--2579. {SIAM}, 2017.
\newblock \doi{10.1137/1.9781611974782.169}.
\newblock URL \url{https://doi.org/10.1137/1.9781611974782.169}.

\bibitem[Alistarh et~al.(2018)Alistarh, Aspnes, and
  Gelashvili]{DBLP:conf/soda/AlistarhAG18}
D.~Alistarh, J.~Aspnes, and R.~Gelashvili.
\newblock Space-optimal majority in population protocols.
\newblock In A.~Czumaj, editor, \emph{Proceedings of the Twenty-Ninth Annual
  {ACM-SIAM} Symposium on Discrete Algorithms, {SODA} 2018, New Orleans, LA,
  USA, January 7-10, 2018}, pages 2221--2239. {SIAM}, 2018.
\newblock \doi{10.1137/1.9781611975031.144}.
\newblock URL \url{https://doi.org/10.1137/1.9781611975031.144}.

\bibitem[Angluin et~al.(2004)Angluin, Aspnes, Diamadi, Fischer, and
  Peralta]{DBLP:conf/podc/AngluinADFP04}
D.~Angluin, J.~Aspnes, Z.~Diamadi, M.~J. Fischer, and R.~Peralta.
\newblock Computation in networks of passively mobile finite-state sensors.
\newblock In S.~Chaudhuri and S.~Kutten, editors, \emph{Proceedings of the
  Twenty-Third Annual {ACM} Symposium on Principles of Distributed Computing,
  {PODC} 2004, St. John's, Newfoundland, Canada, July 25-28, 2004}, pages
  290--299. {ACM}, 2004.
\newblock \doi{10.1145/1011767.1011810}.
\newblock URL \url{https://doi.org/10.1145/1011767.1011810}.

\bibitem[Angluin et~al.(2006{\natexlab{a}})Angluin, Aspnes, Diamadi, Fischer,
  and Peralta]{DBLP:journals/dc/AngluinADFP06}
D.~Angluin, J.~Aspnes, Z.~Diamadi, M.~J. Fischer, and R.~Peralta.
\newblock Computation in networks of passively mobile finite-state sensors.
\newblock \emph{Distributed Comput.}, 18\penalty0 (4):\penalty0 235--253,
  2006{\natexlab{a}}.
\newblock \doi{10.1007/s00446-005-0138-3}.
\newblock URL \url{https://doi.org/10.1007/s00446-005-0138-3}.

\bibitem[Angluin et~al.(2006{\natexlab{b}})Angluin, Aspnes, and
  Eisenstat]{AAE06}
D.~Angluin, J.~Aspnes, and D.~Eisenstat.
\newblock {S}tably {C}omputable {P}redicates {A}re {S}emilinear.
\newblock In \emph{Proc.\ PODC}, pages 292--299, New York, NY, USA,
  2006{\natexlab{b}}.

\bibitem[Angluin et~al.(2007)Angluin, Aspnes, Eisenstat, and Ruppert]{AAER07}
D.~Angluin, J.~Aspnes, D.~Eisenstat, and E.~Ruppert.
\newblock The computational power of population protocols.
\newblock \emph{Distributed Computing}, 20\penalty0 (4):\penalty0 279--304,
  2007.

\bibitem[Angluin et~al.(2008{\natexlab{a}})Angluin, Aspnes, and
  Eisenstat]{DBLP:journals/dc/AngluinAE08}
D.~Angluin, J.~Aspnes, and D.~Eisenstat.
\newblock A simple population protocol for fast robust approximate majority.
\newblock \emph{Distributed Comput.}, 21\penalty0 (2):\penalty0 87--102,
  2008{\natexlab{a}}.
\newblock \doi{10.1007/s00446-008-0059-z}.
\newblock URL \url{https://doi.org/10.1007/s00446-008-0059-z}.

\bibitem[Angluin et~al.(2008{\natexlab{b}})Angluin, Aspnes, and
  Eisenstat]{DBLP:journals/dc/AngluinAE08a}
D.~Angluin, J.~Aspnes, and D.~Eisenstat.
\newblock Fast computation by population protocols with a leader.
\newblock \emph{Distributed Comput.}, 21\penalty0 (3):\penalty0 183--199,
  2008{\natexlab{b}}.
\newblock \doi{10.1007/s00446-008-0067-z}.
\newblock URL \url{https://doi.org/10.1007/s00446-008-0067-z}.

\bibitem[Aspnes and Ruppert(2007)]{AspnesR2007}
J.~Aspnes and E.~Ruppert.
\newblock An introduction to population protocols.
\newblock \emph{Bulletin of the European Association for Theoretical Computer
  Science}, 93:\penalty0 98--117, Oct. 2007.

\bibitem[Berenbrink et~al.(2018)Berenbrink, Els{\"{a}}sser, Friedetzky, Kaaser,
  Kling, and Radzik]{DBLP:conf/wdag/BerenbrinkEFKKR18}
P.~Berenbrink, R.~Els{\"{a}}sser, T.~Friedetzky, D.~Kaaser, P.~Kling, and
  T.~Radzik.
\newblock A population protocol for exact majority with o(log5/3 n)
  stabilization time and theta(log n) states.
\newblock In U.~Schmid and J.~Widder, editors, \emph{32nd International
  Symposium on Distributed Computing, {DISC} 2018, New Orleans, LA, USA,
  October 15-19, 2018}, volume 121 of \emph{LIPIcs}, pages 10:1--10:18. Schloss
  Dagstuhl - Leibniz-Zentrum f{\"{u}}r Informatik, 2018.
\newblock \doi{10.4230/LIPIcs.DISC.2018.10}.
\newblock URL \url{https://doi.org/10.4230/LIPIcs.DISC.2018.10}.

\bibitem[Berenbrink et~al.(2019)Berenbrink, Friedetzky, Kaaser, and
  Kling]{journals/corr/BerenbrinkFKK18}
P.~Berenbrink, T.~Friedetzky, D.~Kaaser, and P.~Kling.
\newblock Tight \& simple load balancing, 2019.
\newblock accepted at IPDPS 2019. an earlier version can be found under the
  arXiv ID arXiv:1808.05389 [cs.DC].

\bibitem[Bilke et~al.(2017)Bilke, Cooper, Els{\"{a}}sser, and
  Radzik]{DBLP:conf/podc/BilkeCER17}
A.~Bilke, C.~Cooper, R.~Els{\"{a}}sser, and T.~Radzik.
\newblock Brief announcement: Population protocols for leader election and
  exact majority with \emph{O}(log\({}^{\mbox{2}}\) \emph{n}) states and
  \emph{O}(log\({}^{\mbox{2}}\)\emph{ n}) convergence time.
\newblock In E.~M. Schiller and A.~A. Schwarzmann, editors, \emph{Proceedings
  of the {ACM} Symposium on Principles of Distributed Computing, {PODC} 2017,
  Washington, DC, USA, July 25-27, 2017}, pages 451--453. {ACM}, 2017.
\newblock \doi{10.1145/3087801.3087858}.
\newblock URL \url{https://doi.org/10.1145/3087801.3087858}.

\bibitem[Doty and Eftekhari(2019)]{DBLP:conf/podc/DotyE19}
D.~Doty and M.~Eftekhari.
\newblock Efficient size estimation and impossibility of termination in uniform
  dense population protocols.
\newblock In P.~Robinson and F.~Ellen, editors, \emph{Proceedings of the 2019
  {ACM} Symposium on Principles of Distributed Computing, {PODC} 2019, Toronto,
  ON, Canada, July 29 - August 2, 2019}, pages 34--42. {ACM}, 2019.
\newblock \doi{10.1145/3293611.3331627}.
\newblock URL \url{https://doi.org/10.1145/3293611.3331627}.

\bibitem[Doty et~al.(2018)Doty, Eftekhari, Michail, Spirakis, and
  Theofilatos]{DBLP:journals/corr/abs-1805-04832}
D.~Doty, M.~Eftekhari, O.~Michail, P.~G. Spirakis, and M.~Theofilatos.
\newblock Exact size counting in uniform population protocols in nearly
  logarithmic time.
\newblock \emph{CoRR}, abs/1805.04832, 2018.
\newblock URL \url{http://arxiv.org/abs/1805.04832}.

\bibitem[Draief and Vojnovic(2012)]{DBLP:journals/siamco/DraiefV12}
M.~Draief and M.~Vojnovic.
\newblock Convergence speed of binary interval consensus.
\newblock \emph{{SIAM} J. Control. Optim.}, 50\penalty0 (3):\penalty0
  1087--1109, 2012.
\newblock \doi{10.1137/110823018}.
\newblock URL \url{https://doi.org/10.1137/110823018}.

\bibitem[Els{\"{a}}sser and Radzik(2018)]{DBLP:journals/eatcs/ElsasserR18}
R.~Els{\"{a}}sser and T.~Radzik.
\newblock Recent results in population protocols for exact majority and leader
  election.
\newblock \emph{Bull. {EATCS}}, 126, 2018.
\newblock URL
  \url{http://bulletin.eatcs.org/index.php/beatcs/article/view/549/546}.

\bibitem[Gasieniec and Stachowiak(2018)]{DBLP:conf/soda/GasieniecS18}
L.~Gasieniec and G.~Stachowiak.
\newblock Fast space optimal leader election in population protocols.
\newblock In A.~Czumaj, editor, \emph{Proceedings of the Twenty-Ninth Annual
  {ACM-SIAM} Symposium on Discrete Algorithms, {SODA} 2018, New Orleans, LA,
  USA, January 7-10, 2018}, pages 2653--2667. {SIAM}, 2018.
\newblock \doi{10.1137/1.9781611975031.169}.
\newblock URL \url{https://doi.org/10.1137/1.9781611975031.169}.

\bibitem[Kosowski and Uznanski(2018)]{DBLP:conf/podc/KosowskiU18}
A.~Kosowski and P.~Uznanski.
\newblock Brief announcement: Population protocols are fast.
\newblock In C.~Newport and I.~Keidar, editors, \emph{Proceedings of the 2018
  {ACM} Symposium on Principles of Distributed Computing, {PODC} 2018, Egham,
  United Kingdom, July 23-27, 2018}, pages 475--477. {ACM}, 2018.
\newblock URL \url{https://dl.acm.org/citation.cfm?id=3212788}.

\bibitem[Mertzios et~al.(2014)Mertzios, Nikoletseas, Raptopoulos, and
  Spirakis]{DBLP:conf/icalp/MertziosNRS14}
G.~B. Mertzios, S.~E. Nikoletseas, C.~L. Raptopoulos, and P.~G. Spirakis.
\newblock Determining majority in networks with local interactions and very
  small local memory.
\newblock In J.~Esparza, P.~Fraigniaud, T.~Husfeldt, and E.~Koutsoupias,
  editors, \emph{Automata, Languages, and Programming - 41st International
  Colloquium, {ICALP} 2014, Copenhagen, Denmark, July 8-11, 2014, Proceedings,
  Part {I}}, volume 8572 of \emph{Lecture Notes in Computer Science}, pages
  871--882. Springer, 2014.
\newblock \doi{10.1007/978-3-662-43948-7\_72}.
\newblock URL \url{https://doi.org/10.1007/978-3-662-43948-7\_72}.

\bibitem[Mocquard et~al.(2015)Mocquard, Anceaume, Aspnes, Busnel, and
  Sericola]{DBLP:conf/nca/MocquardAABS15}
Y.~Mocquard, E.~Anceaume, J.~Aspnes, Y.~Busnel, and B.~Sericola.
\newblock Counting with population protocols.
\newblock In D.~R. Avresky and Y.~Busnel, editors, \emph{14th {IEEE}
  International Symposium on Network Computing and Applications, {NCA} 2015,
  Cambridge, MA, USA, September 28-30, 2015}, pages 35--42. {IEEE} Computer
  Society, 2015.
\newblock \doi{10.1109/NCA.2015.35}.
\newblock URL \url{https://doi.org/10.1109/NCA.2015.35}.

\end{thebibliography}

\appendix
{\section{Probabilistic Tools}%
\label{app:probtools}

\begin{lemma}[Chernoff Bounds]%
\label{lem:chernoff:mult}
Let $n \in \N$ and consider a sequence ${(X_i)}_{i \in \intcc{n}}$ of mutually
independent binary random variables. Define $X \coloneqq \sum_{i \in \intcc{n}}
X_i$ and let $\mu_U, \mu_L \geq 0$ be such that $\mu_L \leq \Ex{X} \leq \mu_U$.
The following inequalities hold for any $\delta \geq 0$ and $\phi \geq 6\mu_U$:
\begin{align}
      \Prob{X \leq (1 - \delta) \cdot \mu_L}
&\leq e^{-\frac{\delta^2 \cdot \mu_L}{2}},
\label{eq:chernoff:mult:a}
\\
      \Prob{X \geq (1 + \delta) \cdot \mu_U}
&\leq e^{-\frac{\delta^2 \cdot \mu_U}{2 + \delta}}, \qquad\text{and}
\label{eq:chernoff:mult:b}
\\
      \Prob{X \geq \phi}
&\leq 2^{-\phi}
\label{eq:chernoff:mult:c}
.
\end{align}
\end{lemma}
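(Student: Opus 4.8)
The plan is to use the standard exponential-moment (Chernoff) method, so the whole argument rests on one moment generating function estimate. Writing $\mu \coloneqq \Ex{X} = \sum_{i \in \intcc{n}} p_i$ with $p_i \coloneqq \Ex{X_i} \in \intcc{0,1}$, independence and $1 + x \leq e^x$ give, for every $t \in \R$,
$\Ex{e^{tX}} = \prod_i \Ex{e^{tX_i}} = \prod_i \bigl(1 + p_i(e^t-1)\bigr) \leq \prod_i e^{p_i(e^t-1)} = e^{\mu(e^t-1)}$.
The only lemma-specific twist is that $\Ex{X}$ is pinned only between $\mu_L$ and $\mu_U$: when $t \geq 0$ we have $e^t - 1 \geq 0$ and may replace $\mu$ by the upper bound $\mu_U$, and when $t \leq 0$ we have $e^t - 1 \leq 0$ and may replace $\mu$ by the lower bound $\mu_L$. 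Each of the three inequalities then follows by a Markov bound on $e^{tX}$ with the appropriate sign of $t$ together with an elementary scalar inequality.

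For the upper tail~\eqref{eq:chernoff:mult:b} I would take $t = \ln(1+\delta) \geq 0$, obtaining $\Prob{X \geq (1+\delta)\mu_U} \leq e^{-t(1+\delta)\mu_U}\,\Ex{e^{tX}} \leq \bigl(e^{\delta}/(1+\delta)^{1+\delta}\bigr)^{\mu_U}$, and then close it with the classical inequality $e^{\delta}/(1+\delta)^{1+\delta} \leq e^{-\delta^2/(2+\delta)}$ for $\delta \geq 0$, equivalently $\ln(1+\delta) \geq 2\delta/(2+\delta)$ (equality at $\delta = 0$; the derivative of the difference is $\delta^2/\bigl((1+\delta)(2+\delta)^2\bigr) \geq 0$). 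Inequality~\eqref{eq:chernoff:mult:c} I would deduce from the raw bound $\bigl(e^{\delta}/(1+\delta)^{1+\delta}\bigr)^{\mu_U}$ rather than from~\eqref{eq:chernoff:mult:b}: if $\mu_U = 0$ then $X = 0$ almost surely and the claim is immediate, and otherwise set $R \coloneqq \phi/\mu_U \geq 6 \geq 2e$ and write $\phi = (1+\delta)\mu_U$ with $1+\delta = R$, so that $\Prob{X \geq \phi} \leq \bigl(e^{R-1}/R^R\bigr)^{\mu_U} \leq \bigl((e/R)^{R}\bigr)^{\mu_U} = (e/R)^{\phi} \leq 2^{-\phi}$, the last step using $e/R \leq 1/2$.

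For the lower tail~\eqref{eq:chernoff:mult:a} I would first dispose of $\delta \geq 1$: then $(1-\delta)\mu_L \leq 0$, so either the event is impossible (since $X \geq 0$) or $\mu_L = 0$ and the right-hand side is $1$. For $0 \leq \delta < 1$, take $t = \ln(1-\delta) \leq 0$ and use the $\mu_L$-substitution: $\Prob{X \leq (1-\delta)\mu_L} = \Prob{e^{tX} \geq e^{t(1-\delta)\mu_L}} \leq e^{-t(1-\delta)\mu_L}\,e^{(e^t-1)\mu_L} = \bigl(e^{-\delta}/(1-\delta)^{1-\delta}\bigr)^{\mu_L}$, which is bounded by $e^{-\delta^2\mu_L/2}$ via $(1-\delta)\ln(1-\delta) \geq \delta^2/2 - \delta$ (equality at $\delta = 0$; the derivative of the difference is $-\ln(1-\delta) - \delta \geq 0$).

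There is no real obstacle here — this is textbook material — so the main difficulty, such as it is, is bookkeeping: choosing $\mu_U$ versus $\mu_L$ consistently according to the sign of $e^t - 1$, handling the degenerate cases ($\mu_U = 0$ or $\mu_L = 0$, and $\delta \geq 1$ in the lower tail), and confirming the two scalar inequalities on $\ln(1 \pm \delta)$. I would state the latter with a one-line derivative argument or simply cite the standard Chernoff–Hoeffding literature rather than reproduce the calculus in full.
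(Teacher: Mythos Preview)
Your proof is correct and follows the standard exponential-moment derivation of the Chernoff bounds. The paper does not actually prove this lemma: it is stated in the appendix of probabilistic tools as a well-known fact and used without justification, so there is no paper proof to compare against. Your write-up is exactly the textbook argument one would supply if a proof were required; the only cosmetic nit is the boundary case $\delta = 1$ with $\mu_L > 0$ in the lower tail, where $(1-\delta)\mu_L = 0$ and the event $\{X \leq 0\} = \{X = 0\}$ need not be impossible, but this is immediately handled by $\Prob{X=0} = \prod_i (1-p_i) \leq e^{-\mu} \leq e^{-\mu_L} \leq e^{-\mu_L/2}$.
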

Let $\mu \coloneqq \Ex{X}$. We often use the following simplified Chernoff
bounds:
\begin{align}
      \Prob{X \leq (1 - \delta) \cdot \mu}
&\leq n^{-a}
\label{eq:chernoff:mult:d}
\\
      \Prob{X \geq \max\set{13a \cdot \log n, (1 + \delta) \cdot \mu}}
&\leq n^{-a} \qquad\text{and}
\label{eq:chernoff:mult:e}
,
\end{align}
where $a \geq 0$ is an arbitrary constant and $\delta \coloneqq \sqrt{3a \cdot
\log(n) / \mu}$. For convenience, we sometimes combine both bounds into
\begin{equation}\label{eq:chernoff:mult:combined}
     \Prob{\abs{X - \mu} \geq \max\set{13a \cdot \log n, \delta \cdot \mu}}
\leq 2n^{-a}
.
\end{equation}

\section{Auxiliary Protocols: Phase Clock}%
\label{app:phase-clock}

This section shows how \cref{lem:phaseclock} follows from the following
technical lemma from~\cite{DBLP:conf/soda/GasieniecS18}. We paraphrase the
lemma slightly in order to make the dependencies on the involved constants more
explicit.
\begin{lemma}[{\cite[Lemma~3.7]{DBLP:conf/soda/GasieniecS18}}]%
\label{lem:phaseclock:helper}
Let $a, d > 0$ be constants and assume $n$ to be sufficiently large with
respect to them. There is a constant $K > 0$ such that the following holds: Let
$p_{\max}$ denote the maximum and $p_{\min}$ the minimum phase counter after an
interaction $t \in \N$. Assume $p_{\max} - p_{\min} \leq 2K$. With probability
at least $1 - n^{-a}$, there is a $t' > t + d \cdot n \log n$ such that:
\begin{enumerate}
\item

    $t'$ is the first interaction after which the maximum phase counter is
    $p_{\max} + K$.

\item

    After interaction $t'$, all nodes have a phase counter value of at least
    $p_{\max}$.

\end{enumerate}
\end{lemma}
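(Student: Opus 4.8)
The statement is \cite[Lemma~3.7]{DBLP:conf/soda/GasieniecS18}; I describe how one would re-derive it. The plan rests on two structural observations about the transition function in \cref{eq:phaseclock}, which I would establish first. (i) For any value $v$, the set of nodes whose phase is $\geq v$ is monotone non-decreasing in time; on every interaction whose initiator is \emph{unmarked} it changes exactly as in a one-way epidemic, so it stochastically dominates a one-way epidemic seeded at its current holders. By \cref{lem:epidemic} it therefore reaches \emph{all} $n$ nodes within $\LDAUOmicron{n\log n}$ interactions whp, and — starting from a single holder and as long as no junta node jumps onto $v$ from below — the lower-bound side of \cref{lem:epidemic} also shows it needs $\Omega(n\log n)$ interactions to grow to $n^{\Omega(1)}$ holders. (ii) The \emph{global} maximum phase increases only via the rule $p\mapsto\max\{p,p'+1\}$ with $p'$ equal to the current maximum; this requires the initiator to be one of the $n^{\epsilon}$ marked (junta) nodes, and immediately after the maximum first reaches a fresh value $w$ there is exactly one node holding $w$.

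The \enquote{no straggler} half (item~2) is then immediate. At interaction $t$ at least one node holds phase $p_{\max}$; treating \enquote{phase $\geq p_{\max}$} as an infection, observation~(i) and the upper bound of \cref{lem:epidemic} give that after at most $c_2\cdot n\log n$ further interactions every node has phase $\geq p_{\max}$ whp, and this persists since phases never decrease. So it suffices to also guarantee $t'\geq t+c_2\cdot n\log n$ whp, which the lower-bound half below delivers once $K$ is chosen large.

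For the lower bound on $t'$ (item~1) the plan is to show that, once it is past $p_{\max}$, the maximum needs $\Omega(n\log n)$ interactions for \emph{each} unit it climbs. By observation~(ii), moving the maximum from $w$ to $w+1$ happens only at an interaction where a junta node meets a node currently holding phase $w$, and right after the maximum first reaches $w$ there is a single such holder. I would then couple the growth of the set of phase-$w$ holders so as to keep it below $n^{\eta}$, for a suitable constant $\eta=\eta(\epsilon)<1-\epsilon$, for $\Omega(n\log n)$ interactions; over that whole window the per-interaction probability that some junta node meets a phase-$w$ holder is $\LDAUOmicron{n^{\epsilon-1}\cdot n^{\eta-1}}$, whose sum over $\Theta(n\log n)$ interactions is $o(1)$, so a union bound shows the maximum whp does not climb during the window. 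Summing over the $K$ unit climbs from $p_{\max}$ to $p_{\max}+K$ (a union bound over $K$ epidemic couplings, each holding whp) gives $t'-t\geq K'\cdot n\log n$ whp with $K'$ linear in $K$; taking $K$ large enough that $K'\geq\max\{d,c_2\}$ proves item~1 and, via the previous paragraph, item~2. That $t'<\infty$ is clear: as long as one node is marked, the maximum increases by $1$ within a finite expected number of interactions, so it reaches $p_{\max}+K$ almost surely.

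The hard part is the coupling just described: one must show that the holders of a \emph{freshly created} maximal value stay few for $\Omega(n\log n)$ interactions even though junta nodes can also acquire that value — not only by the epidemic rule, but by jumping onto it from below — which requires conditioning on the process up to the instant the maximum first reaches $w$, dominating the subsequent growth of the holder set (epidemic contribution plus junta contribution, the latter at rate at most the junta-initiator rate) by a manageable upper bound, and controlling the dependence between these contributions. This is also precisely where it matters that a one-way epidemic builds up \emph{logarithmically slowly} from a single seed: that is what makes the per-unit cost $\Omega(n\log n)$ rather than $\Omega(n)$ and hence lets $K$ be an absolute constant. The hypothesis $p_{\max}-p_{\min}\leq 2K$ is the natural invariant under which \cref{lem:phaseclock} re-applies this step round by round and plays no further role in the single step above; full details are in \cite{DBLP:conf/soda/GasieniecS18}.
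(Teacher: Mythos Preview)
The paper does not prove this statement at all: \cref{lem:phaseclock:helper} is quoted verbatim as \cite[Lemma~3.7]{DBLP:conf/soda/GasieniecS18} and used as a black box to derive \cref{lem:phaseclock}. So there is no \enquote{paper's own proof} to compare your proposal against; the authors simply defer to the cited source.

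That said, your outline is a faithful reconstruction of the argument in \cite{DBLP:conf/soda/GasieniecS18}. The two structural observations you isolate --- that \enquote{phase $\geq v$} spreads at least as fast as a one-way epidemic, and that the global maximum advances only when a marked initiator sees a responder currently at the maximum --- are exactly the ingredients used there. Your split into an epidemic upper bound for item~2 and a per-unit $\Omega(n\log n)$ cost for item~1 matches the original decomposition. You are also right to flag the coupling step as the delicate part: one must control the growth of the set of holders of a freshly created maximal value while accounting for the fact that marked nodes can jump onto it via the $p'+1$ rule rather than by pure epidemic spread. This is handled in \cite{DBLP:conf/soda/GasieniecS18} by bounding the number of holders after $\Theta(n\log n)$ steps by a small power of $n$ and then bounding the probability that any of the $n^{\epsilon}$ junta nodes hits such a holder; your sketch captures this correctly, though turning it into a rigorous proof requires the careful conditioning you allude to.
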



With this, we are ready to restate and prove \cref{lem:phaseclock}.
\lemphaseclocks*
\begin{proof}
The lower bound on $\RoundLength(i)$ follows via an induction over $i$ by applying \cref{lem:phaseclock:helper} with $d = d_1$ and by setting the phase clock parameter $m$ to $3K$.
For the upper bound on the stretch, note that the one-way epidemic (cf.~\cref{lem:epidemic}) implies that, \whplong/, the maximum phase counter increases within $\LDAUOmicron{n \log n}$ rounds (when a marked node finally sees the maximum phase counter).
Thus, \whplong/, it takes at most $m \cdot \LDAUOmicron{n \log n} = \LDAUOmicron{n \log n}$ interactions for a node to leave a given round.
\end{proof}
}
\end{document}